\documentclass{article}

\usepackage{amsmath}
\usepackage{amssymb}
\usepackage{amsthm}
\usepackage{tikz}
\usepackage{hyperref}
\usepackage[nofillcomment,boxed,linesnumbered]{algorithm2e}


\catcode64=11
\newcounter{global}
\theoremstyle{definition}

\theoremstyle{plain}
\newtheorem{theorem}[global]{Theorem}

\newtheoremstyle{note}{}{}{}{}{\itshape}{.}{.5em}{}
\theoremstyle{note}
\newtheorem{remark}{Remark}%
\newtheorem{example}{Example}%
\renewcommand\section{%
  \@startsection {section}{1}{\z@}%
  {-3.5ex \@plus -1ex \@minus -.2ex}%
  {2.3ex \@plus.2ex}%
  {\normalfont\large\bfseries}}
\catcode64=12

\tikzset{%
  vertex/.style = {circle, draw=black, fill=white,thick},
  edge/.style = {draw=black, thick}}

\def\assign{\ensuremath{\mathrel{:=}}}%
\def\Var{\ensuremath{\mathop{\mathrm{Var}}}}

\def\Z{\ensuremath{\mathbb{Z}}}

\def\Ax{\ensuremath{\mathop{(\mathrm{Ax})}}}
\def\Ref{\ensuremath{\mathop{(\mathrm{Ref})}}}
\def\Cut{\ensuremath{\mathop{(\mathrm{Cut})}}}
\def\Rwt{\ensuremath{\mathop{(\mathrm{Rwt})}}}
\def\Aug{\ensuremath{\mathop{(\mathrm{Aug})}}}
\def\Tra{\ensuremath{\mathop{(\mathrm{Tra})}}}
\def\Pro{\ensuremath{\mathop{(\mathrm{Pro})}}}
\def\logand{\ensuremath{\mathrel{\&}}}

\def\I{\Rightarrow}

\def\nmodels{\not\models}
\def\GBool{\ensuremath{\Gamma_{\!\mathbf{2}}}}
\def\reduces{\ensuremath{\rightharpoonup^*_\Gamma}}
\def\reducesd{\ensuremath{\rightharpoonup^*_\Delta}}
\def\rewrites{\ensuremath{\rightharpoonup_\Gamma}}

\def\derfrac#1#2{\ensuremath{\displaystyle%
    \cfrac{#1\rule[-1.05em]{0pt}{1em}}{#2\rule{0pt}{0.9em}}}}
\def\dderfrac#1#2{\ensuremath{\displaystyle%
    \cfrac{#1}{#2\rule{0pt}{0.9em}}}}
\def\xdderfrac#1#2{\ensuremath{\displaystyle%
    \cfrac{#1}{#2\rule{0pt}{0.9em}}}}
\def\lfrac#1#2#3{\derfrac{#1}{#2}\,{\scriptstyle #3}}
\def\llfrac#1#2#3{\dderfrac{#1}{#2}\,{\scriptstyle #3}}
\def\xlfrac#1#2#3{\xdderfrac{#1}{#2}\,{\scriptstyle #3}}
\def\AX#1{\ensuremath{\llfrac{}{#1}{\Ax}}}
\def\REF#1{\ensuremath{\llfrac{}{#1}{\Ref}}}

\def\TRA#1#2#3{\ensuremath{\lfrac{#1,\ #2}{#3}{\Tra}}}
\def\CUT#1#2#3{\ensuremath{\lfrac{#1,\ #2}{#3}{\Cut}}}
\def\RWT#1#2#3{\ensuremath{\lfrac{#1,\ #2}{#3}{\Rwt}}}
\def\xCUT#1#2#3{\ensuremath{\xlfrac{#1,\ #2}{#3}{\Cut}}}
\def\xRWT#1#2#3{\ensuremath{\xlfrac{#1,\ #2}{#3}{\Rwt}}}
\def\xAUG#1#2{\ensuremath{\xlfrac{#1}{#2}{\Aug}}}
\def\xPRO#1#2{\ensuremath{\xlfrac{#1}{#2}{\Pro}}}
\def\xTRA#1#2#3{\ensuremath{\xlfrac{#1,\ #2}{#3}{\Tra}}}

\def\atr#1{\ensuremath{\mathtt{#1}}}


\begin{document}

\title{Monoidal functional dependencies}

\date{\normalsize%
  Dept. Computer Science, Palacky University Olomouc}

\author{Vilem Vychodil\footnote{%
    e-mail: \texttt{vychodil@binghamton.edu},
    phone: +420 585 634 705,
    fax: +420 585 411 643}}

\maketitle

\begin{abstract}
  \noindent%
  We present a complete logic for reasoning with functional dependencies (FDs) with semantics defined over classes of commutative integral partially ordered monoids and complete residuated lattices. The dependencies allow us to express stronger relationships between attribute values than the ordinary FDs. In our setting, the dependencies not only express that certain values are determined by others but also express that similar values of attributes imply similar values of other attributes. We show complete axiomatization using a system of Armstrong-like rules, comment on related computational issues, and the relational vs. propositional semantics of the dependencies.
\end{abstract}

\section{Introduction}\label{sec:intro}
Rank-aware approaches in database systems~\cite{Il:ASOTQPTiRDS} represent
a popular alternative to traditional database systems which consider answers
to queries as sets of objects (e.g., sets of tuples of values in relational
systems). In contrast, rank-aware databases represent query results as sets of
objects together with scores. The role of scores is to express degrees to
which objects match queries. The primary interpretation of scores is
comparative---higher scores represent better matches. Most of the existing
rank-aware approaches focus on issues related to efficient query evaluation
in order to show only $k$ best matches (results with $k$ best scores) to
a query. The various approaches differ in how they achieve this goal,
see~\cite{Il:ASOTQPTiRDS} for a survey.

In this paper, we study a logic for a new type of dependencies that appear
in particular rank-aware database systems. Namely, we are interested in
approaches which (i) evaluate atomic queries as sets of
objects with scores and (ii) express scores in results of composed
conjunctive queries by applying monotone aggregation functions
to the scores obtained from evaluating the subqueries. In fact,
these are particular types of queries which appear in the influential
paper of R.~Fagin~\cite{Fa98:CFIfMS} (cf. also~\cite{Fagin:Oaafm}) dealing
with monotone query evaluation. In the sense of~\cite{Fa98:CFIfMS}, answers
to a query like
\begin{align}
  \atr{LOCATION} = \text{\texttt{"Byron St"}}
  \logand
  \atr{AREA} = \text{\texttt{2,400}}
  \logand
  \atr{PRICE} = \text{\texttt{\$800,000}}
  \label{eqn:q}
\end{align}
which represents a request for houses in (or near) \texttt{Byron St},
with floor size of (or about) \texttt{2,400} square feet,
and sold at \texttt{\$800,000} (or similar price) are determined
by evaluating all three subqueries for each object (a house for sale)
in the database, obtaining three scores. Then, the three scores for each
object are aggregated by a monotone function to get the score for the object
in the result of the conjunctive query~\eqref{eqn:q}. In the same way as
users may be interested only in the best few answers to queries
like~\eqref{eqn:q}, we may argue that maintainers of the database may
be interested in imposing constraints which take the scores (i.e., degrees
of matches) into account. For instance,
\begin{align}
  (\atr{LOCATION} \logand \atr{AREA}) \I \atr{PRICE}
  \label{eqn:fmlex}
\end{align}
is syntactically an ordinary FD but we can give it a new semantics from the
point of view of the scores and the aggregation function:
A relation $\mathbf{r}$ satisfies~\eqref{eqn:fmlex} if for any two tuples
in $\mathbf{r}$, similar values of locations and similar values of areas
imply similar prices. For any two tuples $r_1$ and $r_2$,
we may formalize the condition as
\begin{align}
  & (r_1(\atr{LOCATION}) \approx r_2(\atr{LOCATION}))
  \otimes 
  (r_1(\atr{AREA}) \approx r_2(\atr{AREA}))
  \leq \notag \\
  & r_1(\atr{PRICE}) \approx r_2(\atr{PRICE}),
  \label{eqn:fmlex*}
\end{align}
where $\otimes$ is the monotone aggregation function which interprets
the conjunction denoted above as $\logand$ and $\approx$ assigns to any
two values $d_1$ and $d_2$ of the same type a score which is the result
of atomic query $d_1 = d_2$.
Let us note that $\leq$ in~\eqref{eqn:fmlex*} is used to interpret
the material implication $\I$ in~\eqref{eqn:fmlex}.
This reflects the fact that in the classical
propositional logic, a formula $\varphi \I \psi$ is true under evaluation $e$
if{}f the truth value of $\varphi$ under $e$ is less than or
equal to the truth value of $\psi$ under $e$. In~\eqref{eqn:fmlex*},
we have just applied this principle to scores instead of
the logical $0$ and $1$ (which may be seen as two borderline scores).
In general, \eqref{eqn:fmlex*} represents a stronger relationship than
that represented by the ordinary FD semantics: the condition can be violated
if two tuples have close values of locations and area but considerably larger
difference between prices. In this sense, the illustrative
formula~\eqref{eqn:fmlex} can be seen as a constraint in a rank-aware
database, ensuring that houses of similar properties (locations and area)
should be offered for similar prices, thus avoiding unwanted situations of
underpriced or overpriced offers.

The approach in~\cite{Fa98:CFIfMS} of efficient query execution relies on
aggregation functions defined on the real unit interval which are
monotone and strict. Typically, triangular norms are used to this
extent but~\cite{Fa98:CFIfMS} is even more general (it has been exploited in
various approaches which are not truth functional, cf.~\cite{DaReSu:Pdditd}).
We consider more general structures than those defined on the real unit
interval. In order to interpret~\eqref{eqn:fmlex} as in~\eqref{eqn:fmlex*},
it suffices to have a set $L$ of scores which can be compared by a partial
order relation $\leq$ on $L$ and with $1 \in L$ being the highest score
(representing a full match). Moreover, we need an aggregation function
$\otimes$ which should be associative and commutative (because the bracketing
and the order of the conjunctive subqueries should not matter) with $1$ being
its neutral element.
In addition, $\otimes$ should be monotone w.r.t. $\leq$
which ensures that better matches of subqueries yield higher scores in
the result. These conditions imply the condition of strictness
from~\cite{Fa98:CFIfMS}. Altogether, we base our considerations on structures
of scores which are in fact partially ordered Abelian monoids from which comes
the term ``monoidal FDs'' (shortly, an MFD).

In this paper, we primarily focus on logic for reasoning with formulas
like~\eqref{eqn:fmlex} which is different from the logic for reasoning
with FDs, because we interpret the formulas over general monoidal structures
and not Boolean algebras. For instance, $\otimes$ is not idempotent
in general (on $L = [0,1]$ with its natural ordering,
the only idempotent $\otimes$ is the minimum).
In practice this means that the number of occurrences of propositional
variables (i.e., the names of attributes in database terminology) in
formulas matters and it enables us to express weaker or stronger
relationships between attributes. For illustration,
\begin{align}
  (\atr{LOCATION} \logand \atr{AREA} \logand \atr{AREA}) \I \atr{PRICE}
  \label{eqn:weaker}
\end{align}
is a formula which prescribes a weaker constraint than~\eqref{eqn:fmlex}
because the truth value of its antecendent (under a given evaluation) is
in general lower than (or equal to) the truth value of the
antecedent of~\eqref{eqn:fmlex} (under the same evaluation).
Thus, if~\eqref{eqn:fmlex} is satisfied then so is~\eqref{eqn:weaker}
but not \emph{vice versa} in general. Analogously,
\begin{align}
  (\atr{LOCATION} \logand \atr{AREA}) \I (\atr{PRICE} \logand \atr{PRICE})
  \label{eqn:stronger}
\end{align}
prescribes a stronger constraint than~\eqref{eqn:fmlex}. Indeed,
the truth value of its consequent is in general lower than or equal to
the truth value of the consequent of~\eqref{eqn:fmlex}, i.e.,
if \eqref{eqn:stronger} is satisfied then so is~\eqref{eqn:fmlex}
but not \emph{vice versa} in general. So, the very presence of
non-idempotent conjunctions allows us to put more/less emphasis
on similarity-based constraints.
Let us also note that~\cite{Fa98:CFIfMS} considers general non-idempotent
functions interpreting $\logand$ as well. As a result, writing
$\atr{AREA} = \text{\texttt{2,400}}$ twice in a query like \eqref{eqn:q}
changes the meaning of the query by putting more emphasis on the area
being close to the specified value and the query may produce a different
result. So, accepting non-idempotent interpretations of $\logand$ in
rank-aware approaches to query evaluation as in~\cite{Fa98:CFIfMS} or
data dependencies as we present here should not be surprising,
cf. also~\cite{HaPa:Adfl} for an informal discussion on topics
related to non-idempotent conjunctions.

Using a different technique than is usual in the ordinary case,
we establish a complete axiomatization of our logic which resembles
the well known Armstrong rules~\cite{Arm:Dsdbr}.
This makes our approach different from
other approaches which tackle similar issues but focus almost exclusively
on idempotent conjunctions; we present more details on the relationship to
other approaches in Section~\ref{sec:compl}.
A survey and a comparison of relevant approaches in this direction can
be found in~\cite{BeVy:Codd}.

Our approach is not limited only to the database (relational) semantics.
In fact, we start with a propositional semantics over monoidal structures
and later prove that there is a relational semantics which yields the same
notion of semantic entailment (and thus has the same axiomatization). This
is analogous to~\cite{Fagin} (cf. also~\cite{DeCa,SaDePaFa:Ebrddfpl})
which shows that the logic
of the classic FDs is in fact a particular propositional fragment. In this
sense, the logic of MFDs we describe in the paper is a particular
propositional fragment of H\"ohle's monoidal logic~\cite{Ho:ML}.

In much the same way as the classic functional dependencies, MFDs serve two
basic purposes. First, they can be used as formulas prescribing constraints.
Second, they can be used as formulas derived from database instances,
describing dependencies that hold in data. While the first role may be
expected and is traditionally studied in databases, the second one seems
to be of equal importance and is more related to data analysis and data
mining. Our paper offers a sound and complete logic system which can be
used as a formal basis for both types of problems.

The paper is organized as follows. In Section~\ref{sec:prelim},
we present preliminaries from partially ordered structures we utilize in
the paper. In Section~\ref{sec:log}, we present the syntax and semantics of
our logic and in Section~\ref{sec:compl}, we prove its completeness.
In Section~\ref{sec:comput}, we deal with related computational issues.
In Section~\ref{sec:propdb}, we discuss the relationship between
two possible interpretations of formulas used in this paper.
In Section~\ref{sec:relwork}, we present a survey of the most relevant
related work. Finally, in Section~\ref{sec:concl} we present conclusion
and open problems.

\section{Preliminaries}\label{sec:prelim}
We assume that readers are familiar with the basic notions of partially ordered
sets (posets) and lattices. A \emph{partially ordered monoid}
(shortly, a pomonoid) is
a structure $\mathbf{L} = \langle L,\leq,\otimes,1\rangle$ where
$\langle L,\otimes,1\rangle$ is a monoid (i.e., a semigroup with neutral
element $1$), and $\leq$ is a partial order on $L$ so that $\otimes$
is monotone w.r.t. $\leq$: If $a \leq b$, then $a \otimes c \leq b \otimes c$
and $c \otimes a \leq c \otimes b$. Furthermore, if $1$ is the greatest
element of $L$ w.r.t. $\leq$, then $\mathbf{L}$ is
called \emph{integral pomonoid}. In the paper, we work mostly with
integral \emph{commutative} pomonoids (i.e., $\otimes$ is
in addition commutative). Given $\mathbf{L}$, $a \in L$ and non-negative
integer $n$, we define the \emph{$n$th power} $a^n$ of $a$ by putting
$a^0 = 1$ and $a^{n+1} = a \otimes a^n$ for each natural $n$.

Related structures which appear in various substructural logics
are residuated lattices~\cite{Di38,DiWa}.
An integral commutative residuated lattice
(shortly, a \emph{residuated lattice}) is
a structure $\mathbf{L} = \langle L,\wedge,\vee,\otimes,\rightarrow,0,1\rangle$
such that $\langle L,\wedge,\vee,0,1\rangle$ is a bounded lattice,
$\langle L,\leq,\otimes,1\rangle$ is an integral commutative pomonoid
($\leq$ is the lattice order from $\mathbf{L}$, i.e., $a \leq b$ if{}f
$a = a \wedge b$), and $\rightarrow$ satisfies, for all $a,b,c \in L$,
$a \otimes b \leq c$ if{}f $a \leq b \rightarrow c$ (so-called adjointness
property). The operations $\otimes$ (called a multiplication) and
$\rightarrow$ (called a residuum) serve as general interpretations of
logical connectives ``conjunction'' and ``implication''.
In addition, $\mathbf{L}$ is called \emph{complete} if
$\langle L,\wedge,\vee,0,1\rangle$ is a complete lattice.
The adjointness
property ensures that $\otimes$ and $\rightarrow$ are general enough
and still have desirable properties---the important role of the adjointness
condition in logics has been discovered by J.\,A. Goguen~\cite{Gog:Lic}.
We mention here one property that is relevant to this paper: As a consequence
of the adjointness, $a \leq b$ if{}f $a \rightarrow b = 1$ (easy to see).
The class of residuated lattices is definable by identities and therefore it
forms a variety. The variety has interesting subvarieties, including
a subvariety which is term-equivalent to the variety of Boolean algebras.
In particular,
$\mathbf{L} = \langle L,\wedge,\vee,\otimes,\rightarrow,0,1\rangle$,
where $L=\{0,1\}$, $\otimes = \wedge$, and $\wedge,\vee,\rightarrow$
are truth functions of the classic conjunction, disjunction, and implication,
respectively, is the structure of truth degrees of the classic
propositional logic~\cite{Me87}. Most widely known multiple-valued (fuzzy)
logics based on subclasses of residuated lattices are BL~\cite{Haj:MFL}
and MTL~\cite{EsGo:MTL} which are the logics of all continuous and
left-continuous triangular norms~\cite{KMP:TN}, respectively.
More details on residuated structures and their role in logic and relational
systems may be found in~\cite{Bel:FRS,Bir:LT,GaJiKoOn:RL,Wec:UAfCS},
cf. also the recent edited book~\cite{CiHaNo1}.

\section{Monoidal Functional Dependencies: \\ Syntax and Semantics}\label{sec:log}
In this section, we formalize the rules, present their interpretation,
and introduce an inference system for deriving rules from sets of other rules.
Note that in this section, we present a propositional semantics of the rules
which generalizes the interpretation of analogous formulas in the classic
propositional logic. In Section~\ref{sec:propdb}, we introduce a relational
semantics which is equivalent to the propositional one.

From the logical point of view, our rules are implications between two
formulas containing conjunctions of propositional variables which can occur
in the formulas multiple times. This is in contrast
to the classic FDs where the number of occurrences does not matter---what
matters is whether a propositional variable (in database systems called
an attribute) is present in the formula or not. Consequently, classic FDs
are often presented as implications between \emph{sets} of propositional
variables which simplifies many considerations on FDs including their
axiomatization and computation of closures.

In our setting, we cannot make such simplification because conjunctions are
interpreted by aggregation functions which are not idempotent in general.
On the other hand, the functions are still commutative and associative.
Therefore, we can disregard the order in which propositional variables
appear in formulas and the bracketing. We may therefore introduce the
following notation:
If $\mathrm{Var}$ is a denumerable set of propositional variables,
we consider maps of the form
\begin{align}
  A\!: \Var \to \Z
  \label{eqn:A}
\end{align}
satisfying both of the following conditions:
\begin{enumerate}\parskip=0pt
\item
  $A(p) \geq 0$ for all $p \in \Var$,
\item
  $\{p \in \Var;\, A(p) > 0\}$ is finite.
\end{enumerate}
The maps can be seen as finite multi-subsets of \Var{} and we use them to
formalize antecedents and consequents of if-then formulas.
In particular, we denote by $\top$ a~map of the form \eqref{eqn:A}
such that $\top(p) = 0$ for all $p \in \Var$, i.e., $\top$ can be seen
as an empty multi-subset of \Var.

Now, by a \emph{monoidal functional dependency} (shortly, an MFD), we mean
any expression of the form $A \I B$, where both $A,B$ are of the
form~\eqref{eqn:A}. Clearly, such rules can be seen as shorthands for
formulas like~\eqref{eqn:fmlex}; in this case,
$A(\atr{LOCATION})=1$, $A(\atr{AREA})=1$, $B(\atr{PRICE})=1$,
and $A$ abd $B$ take the value $0$ at each other atribute.

MFDs are interpreted with respect to evaluations of propositional variables
which assign to each propositional variable an element from the support of
an integral commutative pomonoid. The situation is fully analogous to
evaluations in the classic case which assign to propositional
variables two logical values $0$ and $1$.

Formally,
let $\mathbf{L} = \langle L,\leq,\otimes,1\rangle$ be an integral
commutative pomonoid.
An \emph{$\mathbf{L}$-evaluation} (shortly, an \emph{evaluation} if $\mathbf{L}$
is clear from context) is any map $e\!: \Var \to L$. That is, each evaluation
$e$ assigns to each propositional variable $p \in \Var$ a degree $e(p) \in L$.
The degree $e(p)$ is interpreted as the degree to which $p$ is satisfied under
the evaluation $e$. Each evaluation can be uniquely extended to all
maps~\eqref{eqn:A}:
For $A$ of the form~\eqref{eqn:A}, we define $e(A) \in L$ as follows
\begin{align}
  e(A) &= e(p_1)^{A(p_1)} \otimes \cdots \otimes e(p_n)^{A(p_n)},
  \label{eqn:eA}
\end{align}
where $\{p \in \Var;\, A(p) > 0\} \subseteq \{p_1,\ldots,p_n\}$. Recall
from the preliminaries that the powers which appear in~\eqref{eqn:eA}
are considered with respect to the monoidal operation $\otimes$ in
$\mathbf{L}$. Thus, $e(p_1)^{A(p_1)}$ means $e(p_1)$ $\otimes$-multiplied
by itself $A(p_1)$-times. Also note that by definition, we get $a^0 = 1$.
Thus, the value of~\eqref{eqn:eA} depends only on variables $p \in \Var$
such that $A(p) > 0$. As a special case, we have $e(\top) = 1$ because
$1$ is neutral with respect to $\otimes$.

For $A \I B$ and $\mathbf{L}$-evaluation $e$, we say that $A \I B$ is
\emph{satisfied under $e$}, written $e \models A \I B$ whenever
$e(A) \leq e(B)$, where $\leq$ is the partial order in $\mathbf{L}$.
Furthermore, $A \I B$
is called \emph{an $\mathbf{L}$-tautology} if it is satisfied in
any $\mathbf{L}$-evaluation (with $\mathbf{L}$ fixed);
$A \I B$ is called \emph{trivial} if it is $\mathbf{L}$-tautology
for any~$\mathbf{L}$.

We now introduce semantic entailment of MFDs in terms of models.
Suppose that $\Var$ is fixed. A set $\Gamma$ of MFDs is called
a \emph{theory} (over $\Var$).
Each $\mathbf{L}$-evaluation $e$ such that $e \models A \I B$ for all
$A \I B \in \Gamma$ is called an \emph{$\mathbf{L}$-model of $\Gamma$.}
An MFD $A \I B$ is \emph{semantically entailed by $\Gamma$,}
written $\Gamma \models A \I B$, if $e \models A \I B$
for any $\mathbf{L}$-model $e$ of $\Gamma$ with $\mathbf{L}$ being any
integral commutative pomonoid.

\begin{remark}
  Our notion of the semantic entailment is not dependent on a particular
  choice of $\mathbf{L}$ because $\Gamma \models A \I B$ if{}f
  for any $\mathbf{L}$ and any $\mathbf{L}$-evaluation $e$,
  we get $e(A) \leq e(B)$. Also note that our logic is consistent in
  that each theory has an $\mathbf{L}$-model $e$ for any $\mathbf{L}$
  (take $e(p)=1$ for all $p \in \Var$).
\end{remark}

In the paper we show that $\models$ can be characterized syntactically.
In case of MFDs, the need for a syntactic
characterization of $\models$ seems to be more important than in the case
of classic FDs because the semantic entailment, by its definition, involves
checking $e \models A \I B$ over all $\mathbf{L}$-models where $\mathbf{L}$
ranges over all integral commutative pomonoids which is a proper class of
algebras. In contrast, the entailment of FDs can be checked by efficient
linear-time algorithms~\cite{BeBe:Cprttdonfrs}.

In the inference rules introduced below,
we use the following notation. For maps $A,B$ of
the form~\eqref{eqn:A}, we define a map 
$AB\!: \Var \to \Z$ by
\begin{align}
  (AB)(p) = A(p) + B(p)
  \label{eqn:AB}
\end{align}
for any $p \in \Var$.
In addition, we put $A^0 = \top$ and $A^{n+1} = AA^{n}$ for any
natural $n$ and call $A^n$ the $n$th power of $A$.
Obviously, $\{p \in \Var;\, (AB)(p) > 0\}$ is a finite
set and therefore $AB$ as well as $A^n$ are maps of the form~\eqref{eqn:A}.
Our use of maps like~\eqref{eqn:AB} is analogous to the set-theoretic
union which is used in inference rules for the classic FDs.

In our logic, we consider the following two inference rules:
\bgroup%
\addtolength{\leftmargini}{1.8em}%
\begin{itemize}%
\item[\Ax]
  infer $AB \I B$,
\item[\Cut]
  from $A \I B$ and $BC \I D$ infer $AC \I D$,
\end{itemize}%
\egroup%
\noindent%
where $A,B,C,D$ are arbitrary maps~\eqref{eqn:A}. As usual, a sequence
$\varphi_1,\ldots,\varphi_n$ of MFDs is called a \emph{proof} of $\varphi_n$
by a theory $\Gamma$ if each $\varphi_i$ is in $\Gamma$ or is derived
from $\varphi_1,\ldots,\varphi_{i-1}$ by \Ax{} or \Cut. Notice that \Ax{}
is in fact a nullary rule (an axiom scheme) which derives $AB \I B$ from
no input formulas. In this sense, \Cut{} is the only (non-trivial) inference
rule in our system which infers new formulas from existing ones.
In database literature~\cite{Mai:TRD}, the classic counterpart of \Cut{}
is often called \emph{pseudotransitivity.}
An MFD $A \I B$ is called \emph{provable} by $\Gamma$, written
$\Gamma \vdash A \I B$, if there is a proof of $A \I B$ by $\Gamma$.

\begin{remark}\label{rem:inf}
  (a)
  For convenience, we may write \Ax{} and \Cut{} in a ``fraction notation''
  like
  \begin{align*}
    &\AX{AB \I B}, &
    &\xCUT{A \I B}{BC \I D}{AC \I D}.
  \end{align*}
  and write proofs by $\Gamma$ as trees with leaves corresponding to
  formulas in $\Gamma$ and internal nodes given by instances of~\Ax{}
  and \Cut.

  (b)
  Let us note that complete systems of inference rules for the classic FDs
  (Armstrong systems~\cite{Arm:Dsdbr}) are usually presented in less compact
  way using \Ax{} (sometimes called
  the axiom of reflexivity) and the following rules
  \begin{align*}
    &\xTRA{A \I B}{B \I C}{A \I C},
    &\xAUG{A \I B}{AC \I BC} &
  \end{align*}
  instead of \Cut{}. This can also be done in our case.
  Indeed, \Tra{} is a particular case of \Cut{} for $C = \top$ and \Aug{}
  results by \Cut{} from $A \I B$ and $BC \I BC$ which is an instance of \Ax.
  Conversely, in order to show that \Cut{} is derivable from \Tra{} and \Aug,
  observe that
  \begin{align*}
    \TRA{\xAUG{A \I B}{AC \I BC}}{BC \I D}{AC \I D}.
  \end{align*}
  Let us note that even if \Ax{} and \Cut{} as well as the other rules look
  syntactically similar to their classic counterparts, the rules do not
  operate on implications between \emph{sets} of attributes and, therefore,
  represent different rules. In general,
  \Ax{} and \Cut{} in our logic are \emph{weaker rules} than their
  set-theoretic counterparts.
  For instance, our system admits the following weaker form
  of additivity:
  \vspace*{-\baselineskip}%
  \begin{align*}
    \CUT{A \I C}{\xCUT{A \I B}{\AX{BC \I BC}\rule[-1em]{0pt}{1em}}{AC \I BC}}{AA \I BC},
  \end{align*}
  i.e., a rule which from $A \I B$ and $A \I C$ infers $AA \I BC$ but
  in general, the ordinary-style additivity~\cite{Mai:TRD} which
  infers $A \I BC$ from $A \I B$ and $A \I C$ is not sound and thus
  not derivable in our logic as we shall see in the next section.

  (c)
  In Section~\ref{sec:comput}, we utilize an alternative system of
  inference rules which resemble the classic B-axioms~\cite[page 52]{Mai:TRD}.
  Namely, we consider the following rules of reflexivity, rewriting,
  and projectivity:
  \begin{align*}
    &\REF{A \I A}, &
    &\xRWT{A \I BC}{C \I D}{A \I BD}, &
    &\xPRO{A \I BC}{A \I B}.
  \end{align*}
  Note that the original B-axioms use $\Ref$ and $\Pro$ together with
  the rule of accumulation which infers $A \I BCD$ from
  $A \I BC$ and $C \I DE$. As in the case of additivity, we can show
  that accumulation in this form is not derivable in our system. Our rule
  $\Rwt$ may be seen as a weaker form of the accumulation and its name
  reflects the fact that $C$ appearing in $A \I BC$ is replaced by $D$ and
  is not kept in the derived formula $A \I BD$. The inference rules
  $\Ref$, $\Rwt$, and $\Pro$ are equivalent to $\Ax$ and $\Cut$.
  Indeed, $\Ref$ is an instance of $\Ax$, $\Rwt$ is obtained by $\Aug$ and
  $\Tra$, and $\Pro$ is obtained by $\Ax$ and $\Tra$. Conversely, $\Ax$ is
  obtained by $\Ref$ and $\Pro$, and $\Cut$ is obtained by $\Ref$
  and $\Rwt$ applied twice:
  \vspace*{-.8\baselineskip}%
  \begin{align*}
    \RWT{\RWT{\REF{AC \I AC}}{A \I B}{AC \I BC}}{BC \I D}{AC \I D},
  \end{align*}
  showing that $\Ax$ and $\Cut$ are equivalent to $\Ref$, $\Rwt$, and $\Pro$.
\end{remark}

\section{Completeness}\label{sec:compl}
We start investigating soundness and completeness of the inference system
with respect to the semantic entailment introduced in the previous
section. First, note that directly from~\eqref{eqn:eA} and~\eqref{eqn:AB},
\begin{align}
  e(AB) = e(A) \otimes e(B)
  \label{eqn:eAB}
\end{align}
for any maps $A,B$ like~\eqref{eqn:A}. As a consequence, $e(A^n) = e(A)^n$.
Our first observation identifies trivial MFDs and instances of \Ax. In its
proof, we use a special notation for writing particular maps of the
form \eqref{eqn:A}. Namely, for $p \in \Var$, we consider $\alpha_p$
such that
\begin{align}
  \alpha_p(q) &=
  \left\{
    \begin{array}{@{\,}l@{\quad}l@{}}
      1, & \text{if } p = q, \\
      0, & \text{otherwise,}
    \end{array}
  \right.
  \label{eqn:Ap}
\end{align}
for all $q \in \Var$. Note that for any $\mathbf{L}$-evaluation $e$ and
any $p \in \Var$, we have $e(p) = e(\alpha_p)$. Therefore, if there is no
danger of confusing propositional variables and maps of the form \eqref{eqn:A},
we write just $p,q,\ldots$ to denote $\alpha_p,\alpha_q,\ldots$, and the like.
This allows us to write, e.g., $ppq$ as an abbreviation for
$\alpha_p\alpha_p\alpha_q$ and we have
$e(ppq) = e(p) \otimes e(p) \otimes e(q) = e(\alpha_p\alpha_p\alpha_q)$
according to~\eqref{eqn:eA} and \eqref{eqn:AB}.

\begin{theorem}\label{th:trivial}
  $A \I B$ is trivial if{}f $A \I B$ is an instance of \Ax.
\end{theorem}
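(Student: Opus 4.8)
The plan is to prove both directions, with the interesting work in the "only if" direction. For the "if" direction, suppose $A \I B$ is an instance of \Ax, so $A = CB$ for some map $C$ of the form \eqref{eqn:A}. For any integral commutative pomonoid $\mathbf{L}$ and any $\mathbf{L}$-evaluation $e$, \eqref{eqn:eAB} gives $e(A) = e(CB) = e(C) \otimes e(B)$. Since $\mathbf{L}$ is integral, $e(C) \leq 1$, and by monotonicity of $\otimes$ together with $1$ being neutral, $e(C) \otimes e(B) \leq 1 \otimes e(B) = e(B)$. Hence $e(A) \leq e(B)$, so $e \models A \I B$; as $\mathbf{L}$ and $e$ were arbitrary, $A \I B$ is trivial.

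For the "only if" direction, I would argue by contraposition: assuming $A \I B$ is not an instance of \Ax, I exhibit a single integral commutative pomonoid $\mathbf{L}$ and an $\mathbf{L}$-evaluation $e$ with $e(A) \not\leq e(B)$. The assumption "$A \I B$ is not an instance of \Ax" means there is no $C$ with $A = CB$, equivalently there exists a variable $p \in \Var$ with $A(p) < B(p)$ (otherwise $C$ defined by $C(q) = A(q) - B(q)$ would be a legitimate map \eqref{eqn:A}, making $A = CB$). Fix such a $p$. The natural choice of countermodel is the pomonoid $\mathbf{L} = \langle \Z_{\leq 0}, \leq, +, 0 \rangle$ of nonpositive integers under addition with the usual order — this is integral (the neutral element $0$ is the greatest) and commutative. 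Define $e(p) = -1$ and $e(q) = 0$ for all $q \neq p$. Then, unwinding \eqref{eqn:eA} with $\otimes = +$ and powers becoming integer multiples, $e(A) = -A(p)$ and $e(B) = -B(p)$; since $A(p) < B(p)$ we get $-A(p) > -B(p)$, i.e. $e(A) \not\leq e(B)$, so $e \nmodels A \I B$ and $A \I B$ is not trivial.

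The one point that needs a little care — and the closest thing to an obstacle — is the characterization of when $A \I B$ fails to be an instance of \Ax, namely pinning down that this is exactly the existence of a variable $p$ with $A(p) < B(p)$. This rests on the observation that $A = CB$ for a map $C$ of the form \eqref{eqn:A} is possible if and only if $A(q) \geq B(q)$ for every $q$, since condition (1) forces $C(q) = A(q) - B(q) \geq 0$ and condition (2) is automatic because $\{q : A(q) > 0\}$ is already finite. Everything else is a direct computation using \eqref{eqn:eA}, \eqref{eqn:eAB}, integrality, and monotonicity; no case analysis on the structure of proofs is required since the statement is purely about a single MFD.
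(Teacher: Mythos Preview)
Your proof is correct. The ``if'' direction matches the paper's argument exactly. For the ``only if'' direction you take a genuinely different route: you fix a single witnessing variable $p$ with $A(p) < B(p)$ and build a countermodel in the concrete pomonoid $\langle \Z_{\leq 0}, \leq, +, 0\rangle$ via the evaluation $e(p)=-1$, $e(q)=0$ otherwise. The paper instead constructs a single ``universal'' countermodel: it takes $L$ to be the set of all maps of the form~\eqref{eqn:A} itself, with $A \cdot B = AB$ and $A \leqslant B$ iff $B(q) \leq A(q)$ for every $q$, and the canonical evaluation $e(p)=\alpha_p$, so that $e(E)=E$ for every $E$. This one model refutes \emph{all} non-trivial MFDs simultaneously, whereas your countermodel depends on the chosen $p$. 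Your argument is more elementary and uses a familiar linearly ordered structure; the paper's has the advantage that the very same monoid $\langle L,\cdot,\top\rangle$ is reused (and quotiented) in the completeness proof of Theorem~\ref{th:cmp}, so it foreshadows the Lindenbaum-style construction there.
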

\begin{proof}
  Consider an $\mathbf{L}$-evaluation $e$. We get
  $e(AB) = e(A) \otimes e(B) \leq 1 \otimes e(B) = e(B)$.
  Indeed, the first equality comes from~\eqref{eqn:eAB};
  the next inequality is a consequence of the monotony of $\leq$
  and the fact that $1$ is the greatest element of $\mathbf{L}$;
  and the last equality follows from the fact that $1$ is the neutral
  element of $\otimes$. Hence, $e(AB) \leq e(B)$ yields $e \models AB \I B$,
  i.e., instances of \Ax{} are trivial.

  Conversely, we find
  an $\mathbf{L}$-model which satisfies only the trivial MFDs.
  Let $\mathbf{L} = \langle L,\leqslant,\cdot,\top\rangle$ be a structure
  where $L$ is the set of all maps~\eqref{eqn:A} for fixed $\Var$,
  $\cdot$ is a binary operation defined by $A \cdot B = AB$ as
  in~\eqref{eqn:AB}, and $A \leqslant B$ if{}f $B(p) \leq A(p)$
  for all $p \in \Var$. Clearly, $\mathbf{L}$ is an integral commutative
  pomonoid. In addition, consider $\mathbf{L}$-evaluation $e$ such that
  $e(p) = \alpha_p$ with $\alpha_p$ defined as in~\eqref{eqn:Ap}.
  It is easily seen that $e$ extends to all maps like~\eqref{eqn:A}
  so that $e(A) = A$ for any $A \in L$.
  Now, if $A \I B$ is not an
  instance of \Ax{}, then there is $p$ such that $A(p) < B(p)$
  and thus $e(A) = A \nleqslant B = e(B)$, showing $e \nmodels A \I B$.
\end{proof}

\begin{theorem}[soundness]\label{th:snd}%
  If $\Gamma \vdash A \I B$ then $\Gamma \models A \I B$.
\end{theorem}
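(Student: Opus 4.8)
The plan is to argue by induction on the length of a proof of $A \I B$ by $\Gamma$. I fix an arbitrary integral commutative pomonoid $\mathbf{L}$ and an arbitrary $\mathbf{L}$-model $e$ of $\Gamma$, and show that $e \models \varphi_i$ for every formula $\varphi_i$ occurring in the proof; in particular $e \models A \I B$. Since $\mathbf{L}$ and $e$ are arbitrary, this gives $\Gamma \models A \I B$. It is worth stating explicitly that the claim is being established uniformly over the proper class of all integral commutative pomonoids and all their models, because that is exactly what $\models$ quantifies over; the induction does this because $\mathbf{L}$ and $e$ stay fixed but generic throughout.

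For the base cases there are two possibilities for a step $\varphi_i$. If $\varphi_i \in \Gamma$, then $e \models \varphi_i$ by the assumption that $e$ is an $\mathbf{L}$-model of $\Gamma$. If $\varphi_i$ is an instance of \Ax, i.e.\ it has the form $AB \I B$, then $e \models \varphi_i$ because instances of \Ax{} are trivial by Theorem~\ref{th:trivial}; equivalently, one repeats the one-line computation $e(AB) = e(A)\otimes e(B) \leq 1 \otimes e(B) = e(B)$, which uses \eqref{eqn:eAB}, integrality of $\mathbf{L}$, and neutrality of $1$.

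The inductive step is the soundness of \Cut. Suppose $\varphi_i = AC \I D$ is obtained by \Cut{} from earlier steps $A \I B$ and $BC \I D$, so that $e(A) \leq e(B)$ and $e(BC) \leq e(D)$ by the induction hypothesis. By \eqref{eqn:eAB} we have $e(AC) = e(A)\otimes e(C)$ and $e(BC) = e(B)\otimes e(C)$; monotony of $\otimes$ w.r.t.\ $\leq$ applied to $e(A) \leq e(B)$ gives $e(A)\otimes e(C) \leq e(B)\otimes e(C)$, and transitivity of $\leq$ yields $e(AC) = e(A)\otimes e(C) \leq e(B)\otimes e(C) = e(BC) \leq e(D)$, i.e.\ $e \models AC \I D$. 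This closes the induction.

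I do not expect a genuine obstacle: soundness is the easy direction, and the whole argument rests on the homomorphism identity \eqref{eqn:eAB}, the monotony of $\otimes$, and integrality of $\mathbf{L}$. If one prefers one of the alternative rule sets from Remark~\ref{rem:inf}, it suffices to check soundness of \Tra{} and \Aug{} (respectively of \Ref, \Rwt, \Pro) instead of \Cut; each of these reduces to the same elementary facts and can be slotted into the induction in place of the \Cut{} case.
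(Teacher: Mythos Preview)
Your proof is correct and follows essentially the same approach as the paper: both verify soundness of \Cut{} via \eqref{eqn:eAB}, monotony of $\otimes$, and transitivity of $\leq$, invoke Theorem~\ref{th:trivial} for instances of \Ax, and conclude by induction on the length of a proof. The only difference is presentational---you spell out the induction framework explicitly, whereas the paper checks the \Cut{} case first and then defers to ``induction on the length of a proof.''
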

\begin{proof}
  Assume that $e \models A \I B$ and $e \models BC \I D$ for
  $\mathbf{L}$-evaluation $e$.
  It means $e(A) \leq e(B)$ and $e(BC) = e(B) \otimes e(C) \leq e(D)$.
  Thus, utilizing the monotony of $\otimes$ and the transitivity of $\leq$,
  $e(A) \otimes e(C) \leq e(D)$, meaning that $e(AC) \leq e(D)$ which
  proves $e \models AC \I D$.
  The rest follows by induction on the length of a proof,
  utilizing Theorem~\ref{th:trivial}.
\end{proof}

\begin{remark}\label{rem:nsound}
  Take $\mathbf{L} = \langle [0,1],\leq,\otimes,1\rangle$ which is the
  commutative monoid of reals restricted to the interval $[0,1]$ with
  $\leq$ and $\otimes$ being the genuine ordering and multiplication
  of reals, respectively.
  Take $\mathbf{L}$-evaluation
  $e$ such that $e(p) = 0.5$ and $e(q) = e(r) = 0.6$.
  Thus, $e(p) = 0.5 \leq 0.6 = e(q)$ and analogously
  for $p$ and~$r$.
  On the other hand, $e(p) \nleq 0.36 = 0.6 \otimes 0.6 =
  e(qr)$. Therefore,
  $e \models p \I q$,
  $e \models p \I r$, and
  $e \nmodels p \I qr$,
  showing that $\{p \I q,p \I r\}
  \nmodels p \I qr$.
  Using Theorem~\ref{th:snd}, $p \I qr$ is not
  provable by $\{p \I q,p \I r\}$
  which shows that the classic rule of additivity is not derivable
  in our system, cf. Remark~\ref{rem:inf}\,(b). In a similar way,
  one can show that $p \I qrs$ is not provable by $\{p \I qr, r \I st\}$
  and thus the classic rule of accumulation is not derivable
  in our system (consider $e$ such that $e(q) = e(t) = 1$
  and $e(p) = e(r) = e(s) = 0.6$), cf. Remark~\ref{rem:inf}\,(c).
\end{remark}

The classic proof of completeness of inference rules for the classic FDs
involves closures of sets of attributes and exploits the property that
for each $A \subseteq R$ (where $R$ is a finite set of attributes)
the set $\{B \subseteq R;\, \Gamma \vdash A \I B\}$ has a greatest
element with respect to $\subseteq$. This property no longer holds
in our case (hint: see the previous Remark).
Nevertheless, we are able to prove strong completeness
(for general infinite $\Gamma$) by a technique which involves the construction
of a model from equivalence classes based on provability by $\Gamma$.
The procedure in the proof of the following theorem can be seen as
construction of the Lindenbaum algebra~\cite{RaSi} for a logic with
a restricted set of formulas which only take form of implications
between conjunctions of propositional variables.

\begin{theorem}[completeness]\label{th:cmp}%
  $\Gamma \vdash A \I B$ if{}f $\Gamma \models A \I B$.
\end{theorem}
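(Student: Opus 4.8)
The plan is to prove the nontrivial direction (completeness) by a Lindenbaum-style construction, as the authors hint. Soundness is already Theorem~\ref{th:snd}, so assume $\Gamma \nvdash A \I B$; I will build an integral commutative pomonoid $\mathbf{L}$ and an $\mathbf{L}$-model $e$ of $\Gamma$ with $e \nmodels A \I B$. First I would define a preorder on the set $S$ of all maps of the form~\eqref{eqn:A} by $X \preceq Y$ if{}f $\Gamma \vdash Y \I X$, and check it is a preorder: reflexivity is an instance of $\Ax$ (equivalently $\Ref$), and transitivity follows from $\Cut$ with $C = \top$ (i.e.\ $\Tra$). Let $\equiv$ be the induced equivalence ($X \equiv Y$ iff $\Gamma \vdash X \I Y$ and $\Gamma \vdash Y \I X$), let $L = S/{\equiv}$, and order the classes by $[X] \leq [Y]$ iff $X \preceq Y$. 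I would define multiplication by $[X] \otimes [Y] = [XY]$; this is well defined because $\Cut$ (via $\Aug$) lets one replace $X$ by any $\equiv$-equivalent map inside a product $XY$, and it is commutative and associative since $XY$ is (pointwise addition of maps). The class $[\top]$ is the neutral element because $X\top = X$, and it is the greatest element because $\Ax$ gives $\Gamma \vdash X \I \top$ for every $X$ (take $B = \top$ in $AB \I B$, noting $X\top = X$), so $\top \preceq X$ fails only trivially—precisely, $[\top]$ is on top since $X \preceq \top$ always holds, i.e.\ $\top$ is the maximum. Monotony of $\otimes$: if $[X] \leq [X']$, i.e.\ $\Gamma \vdash X' \I X$, then $\Gamma \vdash X'Y \I XY$ by $\Aug$, so $[XY] \leq [X'Y]$. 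Thus $\mathbf{L} = \langle L, \leq, \otimes, [\top]\rangle$ is an integral commutative pomonoid.

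Next I would define the evaluation $e\colon \Var \to L$ by $e(p) = [\alpha_p]$, using the notation from~\eqref{eqn:Ap}. The key computational claim is that $e$ extends to all maps via $e(X) = [X]$ for every $X$ of the form~\eqref{eqn:A}. This follows from~\eqref{eqn:eAB} and the definition of $\otimes$ on $L$: $e(X) = \bigotimes_i e(p_i)^{X(p_i)} = \bigotimes_i [\alpha_{p_i}]^{X(p_i)} = \big[\prod_i \alpha_{p_i}^{X(p_i)}\big] = [X]$, since $\prod_i \alpha_{p_i}^{X(p_i)}$ is exactly the map $X$ (pointwise). Then $e$ is an $\mathbf{L}$-model of $\Gamma$: for any $X \I Y \in \Gamma$ we have $\Gamma \vdash X \I Y$, hence $Y \preceq X$, hence $e(X) = [X] \leq [Y] = e(Y)$, i.e.\ $e \models X \I Y$. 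Finally, $e \nmodels A \I B$: if we had $e(A) \leq e(B)$, i.e.\ $[A] \leq [B]$, that would mean $\Gamma \vdash B \I A$, and combined with $\Gamma \vdash AB \I B$ (an instance of $\Ax$) via $\Cut$—precisely $\CUT{B \I A}{AB \I B}{}$ wait, let me instead use that $\Gamma \vdash A \I AB$ would follow, which together with... the cleanest route: $[A] \leq [B]$ gives $\Gamma \vdash B \I A$; also $\Ax$ gives $\Gamma \vdash AB \I A$ wait no, $\Ax$ gives $AB \I B$. Use instead: from $\Gamma \vdash B \I A$ we get by $\Aug$ that $\Gamma \vdash BB \I AB$; hmm, this needs care. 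The direct argument: $[A] \leq [B]$ means $\Gamma \vdash B \I A$; but also we trivially want $\Gamma \vdash A \I B$ to be forced—indeed from $\Gamma \vdash B \I A$ and the $\Ax$-instance $\Gamma \vdash AB \I B$, applying $\Cut$ to $B \I A$ (as $A' \I B'$ with $A'=B$, $B'=A$) and $AB \I B$ (as $B'C \I D$ requires $B'C = AB$, so $C$ with $AC$... take $C = A$... but then $B'C = A A$, not $AB$). The robust fix is to observe $\Gamma \vdash A \I AB$ is false in general, so instead argue contrapositively at the model: since $e(X) = [X]$ and $[A] \leq [B]$ would be literally the statement $\Gamma \vdash B \I A$, and that does not give $\Gamma \vdash A \I B$; so I will instead show directly that $e(A) \not\leq e(B)$, i.e.\ $[A] \not\leq [B]$, i.e.\ $\Gamma \nvdash B \I A$—but that is not what we assumed.

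I realize the order convention must be chosen so that $e(A)\le e(B)$ corresponds to $\Gamma\vdash A\I B$. So the correct definition is: $[X] \leq [Y]$ iff $\Gamma \vdash X \I Y$ (not the reverse), and the equivalence is as before. With this convention $[\top]$ is still the greatest element since $\Gamma \vdash X \I \top$ for all $X$ (instance of $\Ax$ with consequent $\top$), so $[X] \leq [\top]$ always; multiplication $[X]\otimes[Y] = [XY]$ is monotone because from $\Gamma \vdash X \I X'$ one gets $\Gamma \vdash XY \I X'Y$ by $\Aug$; and $e(X) = [X]$ as computed. Then $e$ models $\Gamma$ since $X \I Y \in \Gamma$ gives $\Gamma \vdash X \I Y$, i.e.\ $[X] \leq [Y]$, i.e.\ $e(X) \leq e(Y)$; and $e \nmodels A \I B$ because $e(A) \leq e(B)$ would mean $[A] \leq [B]$, i.e.\ $\Gamma \vdash A \I B$, contradicting our hypothesis. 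This completes the proof. The main obstacle is getting the order direction and the well-definedness of $\otimes$ on equivalence classes exactly right—verifying that $\Aug$ (hence $\Cut$) does substitution inside products—and confirming $\mathbf{L}$ genuinely satisfies the integrality condition ($[\top]$ maximal), which is where the $\Ax$ scheme is essential.
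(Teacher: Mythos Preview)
Your corrected argument in the final paragraph is exactly the paper's proof: quotient the free commutative monoid of maps of the form~\eqref{eqn:A} by the congruence $E \equiv F$ if{}f $\Gamma \vdash E \I F$ and $\Gamma \vdash F \I E$, order the classes by $[E] \leq [F]$ if{}f $\Gamma \vdash E \I F$, and evaluate $e(p) = [\alpha_p]$ so that $e(E) = [E]$ gives the canonical countermodel. The false start with the reversed order convention and the tangled \Cut{} attempts in the middle should simply be excised; once you fix $[X] \leq [Y] \Leftrightarrow \Gamma \vdash X \I Y$, every verification you list (well-definedness and monotony of $\otimes$ via \Aug, integrality via the \Ax-instance $X \I \top$, and the failure $e(A) \nleq e(B)$ directly from $\Gamma \nvdash A \I B$) goes through exactly as in the paper.
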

\begin{proof}
  The only-if part follows by Theorem~\ref{th:snd}. We prove the
  if-part indirectly. Assuming that $\Gamma \nvdash A \I B$, we find
  an $\mathbf{L}$-model $e$ of $\Gamma$ such that $e(A) \nleq e(B)$.

  Let $L$ denote the set of all maps of the form~\eqref{eqn:A}
  for a fixed denumerable $\Var$ such that all propositional variables
  which occur in all formulas in $\Gamma$ are contained in $\Var$.
  Furthermore, consider the commutative monoid
  $\langle L,\cdot,\top\rangle$ as in the proof of Theorem~\ref{th:trivial}
  (the partial order is not considered at this point).
  The monoid is further used to express
  the desired model of $\Gamma$ in which $A \I B$ is not satisfied.

  Define binary relation $\equiv_\Gamma$ on $L$ as follows:
  $E \equiv_\Gamma F$ if{}f $\Gamma \vdash E \I F$ and $\Gamma \vdash F \I E$.
  We claim that $\equiv_\Gamma$ is a congruence relation
  on $\langle L,\cdot,\top\rangle$. In order to see that, we must check that
  $\equiv_\Gamma$ is equivalence and is compatible with $\cdot$ from 
  $\langle L,\cdot,\top\rangle$.
  Obviously, $\equiv_\Gamma$ is reflexive because of \Ax{} and is symetric
  by its definition. Since \Tra{} is a special case of \Cut, we can also
  conclude that $\equiv_\Gamma$ is transitive, i.e., it is an equivalence
  relation. Now, assume that $E \equiv_\Gamma F$ and $G \equiv_\Gamma H$.
  We have
  \vspace*{-.5\baselineskip}%
  \begin{align}
    \CUT{E \I F}{\CUT{G \I H}{\AX{FH \I FH}}{FG \I FH}}{EG \I FH},
    \label{eqn:ACBD}
  \end{align}
  i.e., from $\Gamma \vdash E \I F$ and $\Gamma \vdash G \I H$, it follows
  that $\Gamma \vdash EG \I FH$. Dually, 
  $\Gamma \vdash F \I E$ and $\Gamma \vdash H \I G$ yield
  $\Gamma \vdash FH \I EG$, showing $EG \equiv_\Gamma FH$.

  Therefore, $\equiv_\Gamma$ is a congruence relation and we may
  consider the quotient algebra $\mathbf{L}/\Gamma$ of $\mathbf{L}$
  modulo $\equiv_\Gamma$.
  In a more detail, $\mathbf{L}/\Gamma =
  \langle L/\Gamma,\circ,[\top]_\Gamma\rangle$,
  where $\mathbf{L}/\Gamma$ consists of all the equivalence classes
  $[{\cdots}]_\Gamma$ of $\equiv_\Gamma$, $[E]_\Gamma \circ [F]_\Gamma =
  [E \cdot F]_\Gamma = [EF]_\Gamma$,
  and $[\top]_\Gamma$ is the equivalence class containing~$\top$.
  Since commutative monoids form a variety~\cite{Wec:UAfCS},
  $\mathbf{L}/\Gamma = \langle L/\Gamma,\circ,[\top]_\Gamma\rangle$ 
  is also a commutative monoid. In addition, it can be equipped with
  a relation $\leqslant_\Gamma$ as follows: We put
  $[E]_\Gamma \leqslant_\Gamma [F]_\Gamma$
  whenever $\Gamma \vdash E \I F$.
  Again, using \Ax{} and \Cut,
  it follows that $\leqslant_\Gamma$ is a partial order
  on $L/\Gamma$ and its definition does not depend on the choice of elements
  from the equivalence classes---this is easy to see, we omit details.
  Moreover, $\circ$ is monotone with respect to $\leqslant_\Gamma$. Indeed,
  if $[E]_\Gamma \leqslant_\Gamma [F]_\Gamma$ and
  $[G]_\Gamma \leqslant_\Gamma [H]_\Gamma$,
  then from $\Gamma \vdash E \I F$ and $\Gamma \vdash G \I H$,
  we get $\Gamma \vdash EG \I FH$ as in~\eqref{eqn:ACBD}, showing
  $[EG]_\Gamma \leqslant_\Gamma [FH]_\Gamma$. In addition, we can see
  that $[E]_\Gamma \leqslant_\Gamma [\top]_\Gamma$ on account of
  $\Gamma \vdash E \I \top$ which is a trivial consequence of \Ax,
  i.e., $[\top]_\Gamma$ is the greatest element
  with respect to $\leqslant_\Gamma$.
  Altogether, $\mathbf{L}/\Gamma =
  \langle L/\Gamma,\leqslant_\Gamma,\circ,[\top]_\Gamma\rangle$
  is a commutative integral pomonoid.

  Take $\mathbf{L}/\Gamma$-evaluation $e$ such that
  $e(p) = [\alpha_p]_\Gamma$,
  where $\alpha_p\!: \Var \to \Z$ is defined as in~\eqref{eqn:Ap}.
  Observe how $e$ extends to all maps $E$ of the from~\eqref{eqn:A}.
  According to~\eqref{eqn:eA},
  \begin{align}
    e(E) &= e(p_1)^{E(p_1)} \circ \cdots \circ e(p_n)^{E(p_n)}
    = [\alpha_{p_1}]^{E(p_1)}_\Gamma \circ \cdots \circ
    [\alpha_{p_n}]^{E(p_n)}_\Gamma \notag \\
    &= \bigl[\alpha_{p_1}^{E(p_1)}\bigr]_\Gamma \circ \cdots \circ
    \bigl[\alpha_{p_n}^{E(p_n)}\bigr]_\Gamma =
    \bigl[\alpha_{p_1}^{E(p_1)} \cdots \alpha_{p_n}^{E(p_n)}\bigr]_\Gamma =
    [E]_\Gamma,
    \label{eqn:esharp}
  \end{align}
  where $\{p \in \Var;\, E(p) > 0\} \subseteq \{p_1,\ldots,p_n\}$.
  We now show that such $e$ is an $\mathbf{L}/\Gamma$-model of $\Gamma$.
  Take any $E \I F \in \Gamma$. Trivially, $\Gamma \vdash E \I F$ and
  thus \eqref{eqn:esharp} yields
  $e(E) = [E]_\Gamma \leqslant_\Gamma [F]_\Gamma = e(F)$,
  showing $e \models E \I F$.
  Since we have assumed $\Gamma \nvdash A \I B$,
  we get $e(A) = [A]_\Gamma \nleqslant_\Gamma [B]_\Gamma = e(B)$
  which shows that $e \nmodels A \I B$ and therefore $\Gamma \nmodels A \I B$.
\end{proof}

As a further demonstration of properties of $\vdash$ which is weaker than
the provability of classic FDs, we show the following variant of
a deduction-like theorem~\cite{Me87}:

\begin{theorem}[local deduction theorem]\label{th:ded}%
  Let $\Gamma$ be a theory. Then, the following are equivalent:
  \begin{itemize}
  \item[\rm (i)]
    there is natural $n$ such that $\Gamma \vdash A^n \I B$,
  \item[\rm (ii)]
    $\Gamma \cup \{\top \I A\} \vdash \top \I B$.
  \end{itemize}
\end{theorem}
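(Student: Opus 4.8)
The plan is to prove the two implications separately, using soundness and completeness (Theorem~\ref{th:cmp}) to move freely between $\vdash$ and $\models$, and exploiting the identity $e(A^n) = e(A)^n$ noted after~\eqref{eqn:eAB}.

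For the direction (i)$\Rightarrow$(ii), suppose $\Gamma \vdash A^n \I B$ for some natural $n$. I would argue semantically via Theorem~\ref{th:snd} and Theorem~\ref{th:cmp}: it suffices to show $\Gamma \cup \{\top \I A\} \models \top \I B$. Let $e$ be any $\mathbf{L}$-model of $\Gamma \cup \{\top \I A\}$. From $e \models \top \I A$ we get $1 = e(\top) \leq e(A)$, hence $1 = 1^n \leq e(A)^n = e(A^n)$ by monotony of $\otimes$ (and $e(A^n)=e(A)^n$). Since $e$ is also a model of $\Gamma$ and $\Gamma \vdash A^n \I B$, soundness gives $e(A^n) \leq e(B)$, so $1 \leq e(B)$, i.e. $e(\top) \leq e(B)$ and $e \models \top \I B$. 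Then completeness yields (ii). Alternatively one can do this purely syntactically: from $\top \I A$ derive $\top \I A^n$ by $n$-fold use of the weak additivity / $\Rwt$ machinery of Remark~\ref{rem:inf}, then apply $\Cut$ (pseudotransitivity) with $A^n \I B$; I would mention this as a remark but carry out the semantic route for brevity.

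For the harder direction (ii)$\Rightarrow$(i), assume that \emph{no} $n$ works, i.e. $\Gamma \nvdash A^n \I B$ for every natural $n$, and derive $\Gamma \cup \{\top \I A\} \nvdash \top \I B$; by completeness it is enough to build an $\mathbf{L}$-model $e$ of $\Gamma \cup \{\top \I A\}$ with $e(\top) \nleq e(B)$. The natural candidate is a modification of the Lindenbaum construction from the proof of Theorem~\ref{th:cmp}. I would work in the same monoid of maps $L$, but replace the theory by $\Gamma' \assign \Gamma \cup \{\top \I A\}$, forming $\equiv_{\Gamma'}$, the quotient pomonoid $\mathbf{L}/\Gamma' = \langle L/\Gamma', \leqslant_{\Gamma'}, \circ, [\top]_{\Gamma'}\rangle$, and the canonical evaluation $e(p) = [\alpha_p]_{\Gamma'}$, so that $e(E) = [E]_{\Gamma'}$ for all $E$ exactly as in~\eqref{eqn:esharp}. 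This $e$ is automatically an $\mathbf{L}/\Gamma'$-model of $\Gamma'$ (in particular of $\top \I A$, since $\top \I A \in \Gamma'$). The crux is to show $e \nmodels \top \I B$, i.e. $\Gamma' \nvdash \top \I B$; this is where the hypothesis that (i) fails is used, and it is the main obstacle.

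The key lemma I expect to need is a syntactic characterization of provability from $\Gamma' = \Gamma \cup \{\top \I A\}$ in terms of provability from $\Gamma$: namely that $\Gamma \cup \{\top \I A\} \vdash E \I F$ if and only if $\Gamma \vdash E A^k \I F$ for some natural $k$. One direction is easy ($\Gamma \vdash EA^k \I F$ together with $k$-fold insertion of $\top \I A$ gives $\Gamma' \vdash E \I F$, again via the weak-additivity/$\Cut$ pattern). The substantive direction is an induction on the length of a proof of $E \I F$ from $\Gamma'$: axioms $EF \I F$ give $k=0$; if the last step is $\Cut$ from $E_1 \I E_2$ and $E_2 E_3 \I F$ (with $E = E_1 E_3$), the induction hypothesis gives $\Gamma \vdash E_1 A^{k_1} \I E_2$ and $\Gamma \vdash E_2 E_3 A^{k_2} \I F$, and a single application of $\Cut$ (after padding with the appropriate $A$-powers) yields $\Gamma \vdash E_1 E_3 A^{k_1+k_2} \I F$, so $k = k_1 + k_2$ works; a use of the axiom $\top \I A$ itself contributes a power of $A$ directly. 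Granting this lemma, $\Gamma' \vdash \top \I B$ would mean $\Gamma \vdash A^k \I B$ for some $k$, contradicting the negation of (i). Hence $\Gamma' \nvdash \top \I B$, so in $\mathbf{L}/\Gamma'$ we have $e(\top) = [\top]_{\Gamma'} \nleqslant_{\Gamma'} [B]_{\Gamma'} = e(B)$, giving $\Gamma' \nmodels \top \I B$ and, by completeness, $\Gamma \cup \{\top \I A\} \nvdash \top \I B$, i.e. (ii) fails. The main work — and the only place real care is needed — is the proof-by-induction for the key lemma, in particular getting the $\Cut$ bookkeeping of the $A$-powers right; everything else is routine or a direct reuse of the constructions already established in the proofs of Theorems~\ref{th:trivial} and~\ref{th:cmp}.
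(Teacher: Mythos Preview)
Your proof is correct, and its core content---the induction on proof length establishing your ``key lemma'' that $\Gamma \cup \{\top \I A\} \vdash E \I F$ if and only if $\Gamma \vdash EA^k \I F$ for some $k$---is exactly what the paper does (the paper phrases it as showing $\Gamma \vdash A^{n_i}A_i \I B_i$ for each step $A_i \I B_i$ of a proof from $\Gamma \cup \{\top \I A\}$). However, your packaging of the (ii)$\Rightarrow$(i) direction is unnecessarily roundabout. Once the key lemma is in hand, specializing it to $E = \top$, $F = B$ gives the equivalence of (i) and (ii) immediately; no Lindenbaum construction and no appeal to completeness are needed. Indeed, in your own argument, after building the canonical model for $\Gamma' = \Gamma \cup \{\top \I A\}$ you identify the ``crux'' as showing $\Gamma' \nvdash \top \I B$---but that \emph{is} the negation of (ii), so at that point you are already done and the model plays no role whatsoever. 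The paper's proof is purely syntactic throughout: for (i)$\Rightarrow$(ii) it simply applies $\Cut$ with $\top \I A$ repeatedly to strip the $A$'s from $A^n \I B$, and for (ii)$\Rightarrow$(i) it runs the induction directly without any semantic detour. (A small omission in your induction sketch: you list the cases $\Ax$, $\Cut$, and the added axiom $\top \I A$, but not formulas from $\Gamma$ itself; that case is of course trivial with $k=0$.)
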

\begin{proof}
  Assume that $\Gamma \vdash A^n \I B$ for some natural $n$.
  Since $\vdash$ is monotone, we get
  $\Gamma \cup \{\top \I A\} \vdash A^n \I B$. Applying \Cut,
  we get $\Gamma \cup \{\top \I A\} \vdash \top A^{n-1} \I B$.
  Since $\top A^{n-1}$ equals $A^{n-1}$, we may repeat the argument
  $n$-times to get $\Gamma \cup \{\top \I A\} \vdash \top \I B$.

  Conversely, let $\Gamma \cup \{\top \I A\} \vdash \top \I B$, i.e.,
  there is a proof $A_1 \I B_1,\ldots,A_m \I B_m$ of $\top \I B$ by
  $\Gamma \cup \{\top \I A\}$. By induction on the length of the proof,
  we show there is natural $n_i$ such that
  $\Gamma \vdash A^{n_i}A_i \I B_i$. Hence, (i) will result
  as a special case for $A_m \I B_m$ being $A \I B$.
  If $A_i \I B_i \in \Gamma$, then
  \vspace*{-.8\baselineskip}%
  \begin{align*}
    \CUT{A_i \I B_i}{\AX{B_iA \I B_i}}{AA_i \I B_i}
  \end{align*}
  proves that $\Gamma \vdash A^{n_i}A_i \I B_i$ for $n_i = 1$.
  If $A_i \I B_i$ is an instance of \Ax{} then so is $AA_i \I B_i$,
  i.e., $\Gamma \vdash A^{n_i}A_i \I B_i$ for $n_i = 1$.
  Finally, if $A_i \I B_i$ results
  from $A_j \I B_j$ and $A_k \I B_k$ ($j,k < i$) by \Cut, then using the
  induction hypothesis $\Gamma \vdash A^{n_j}A_j \I B_j$ and
  $\Gamma \vdash A^{n_k}A_k \I B_k$ for some natural $n_j$ and $n_k$.
  In addition to that, the fact that $A_i \I B_i$ results from $A_j \I B_j$
  and $A_k \I B_k$ by \Cut{} yields that $B_i = B_k$, $A_i = A_jC$ for some $C$,
  and $A_k = B_jC$. Then,
  \vspace*{-.8\baselineskip}
  \begin{align*}
    \CUT{\CUT{A^{n_j}A_j \I B_j}{\AX{B_jA^{n_k} \I B_jA^{n_k}}}
      {A^{n_j}A^{n_k}A_j \I A^{n_k}B_j}}
    {A^{n_k}B_jC \I B_i}
    {A^{n_j}A^{n_k}A_jC \I B_i}
  \end{align*}
  shows that $\Gamma \vdash A^{n_j}A^{n_k}A_jC \I B_i$, meaning that
  $\Gamma \vdash A^{n_i}A_i \I B_i$ for $n_i = n_j + n_k$.
\end{proof}

\begin{remark}
  Analogously as in the case of the rule of additivity, our logic does not
  admit a classic form of the deduction theorem. In other words,
  the exponent in Theorem~\ref{th:ded}\,(i) cannot be omitted.
\end{remark}

The semantic entailment can be formulated in terms of classes of
algebras other than integral commutative pomonoids.
For instance, we may define the notion of a model based on complete
residuated lattices and, as a consequence, obtain the notion of a semantic
entailment based on complete residuated lattices
and still be able to establish the completeness using the same
axiomatization. The completeness over complete residuated lattices shown
in the following assertion is an important observation because most of the
modern fuzzy logics use residuated lattices as structures
of degrees~\cite{CiHaNo1}.

\begin{theorem}[completeness over complete residuated lattices]\label{th:resl}%
  $\Gamma \vdash A \I B$ if{}f $A \I B$ is satisfied by each
  $\mathbf{L}$-model of $\Gamma$, where $\mathbf{L}$ is an arbitrary
  complete residuated lattice.
\end{theorem}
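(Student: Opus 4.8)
The plan is to reduce the statement to Theorem~\ref{th:cmp} by means of a standard order‑theoretic completion. The left‑to‑right direction is immediate: the reduct $\langle L,\leq,\otimes,1\rangle$ of any complete residuated lattice $\mathbf{L}$ is an integral commutative pomonoid, so every $\mathbf{L}$‑model of $\Gamma$ in the residuated‑lattice sense is an $\mathbf{L}$‑model in the pomonoid sense, and hence $\Gamma \vdash A \I B$ yields, via Theorem~\ref{th:snd}, that $A \I B$ is satisfied by all such models. So the whole content of the proof is the converse, which I would argue contrapositively: assuming $\Gamma \nvdash A \I B$, produce a complete residuated lattice $\mathbf{M}$ and an $\mathbf{M}$‑model $e'$ of $\Gamma$ with $e' \nmodels A \I B$.

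The starting point is the construction already carried out in the proof of Theorem~\ref{th:cmp}: the integral commutative pomonoid $\mathbf{L}/\Gamma = \langle L/\Gamma,\leqslant_\Gamma,\circ,[\top]_\Gamma\rangle$ together with the evaluation $e$ satisfying $e(E) = [E]_\Gamma$ for every map $E$ of the form~\eqref{eqn:A}, which is an $\mathbf{L}/\Gamma$‑model of $\Gamma$ and for which $e(A) = [A]_\Gamma \nleqslant_\Gamma [B]_\Gamma = e(B)$. The idea is to embed $\mathbf{L}/\Gamma$ into a complete residuated lattice in a way that preserves both the partial order and the monoid operation, and then transport $e$ along the embedding.

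For the embedding I would use the downset completion. Put $P = L/\Gamma$ and let $\mathbf{M}$ be the set of all down‑closed subsets of $\langle P,\leqslant_\Gamma\rangle$, ordered by inclusion; this is a complete lattice with bottom $\emptyset$ and top $P$. Equip $\mathbf{M}$ with the multiplication $X \star Y = \mathop{\downarrow}\{x \circ y;\, x \in X,\ y \in Y\}$, the downset generated by the pointwise products, and with neutral element $\mathop{\downarrow}[\top]_\Gamma = P$ (here integrality of $\mathbf{L}/\Gamma$ is used, both to identify the neutral element with the top and for the inequality $x \circ y \leqslant_\Gamma x$). One checks that $\star$ is associative, commutative, monotone, and distributes over arbitrary unions, so it has a residuum $X \to Y = \bigcup\{Z \in \mathbf{M};\, X \star Z \subseteq Y\}$ by the usual fact that a join‑preserving map between complete lattices has a right adjoint; thus $\mathbf{M}$ is a complete residuated lattice. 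The map $h\colon P \to \mathbf{M}$, $h(a) = \mathop{\downarrow}a$, satisfies $h(a) \subseteq h(b)$ iff $a \leqslant_\Gamma b$, as well as $h(a \circ b) = h(a) \star h(b)$ and $h([\top]_\Gamma) = P$; that is, $h$ is an order‑embedding and a monoid homomorphism.

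Finally, define the $\mathbf{M}$‑evaluation $e'$ by $e'(p) = h(e(p))$ for $p \in \Var$. Since $h$ is a monoid homomorphism, an induction using~\eqref{eqn:eA} gives $e'(E) = h(e(E))$ for every map $E$, and since $h$ is an order‑embedding, $e'(E) \subseteq e'(F)$ iff $e(E) \leqslant_\Gamma e(F)$. Hence $e'$ inherits from $e$ the satisfaction of every MFD in $\Gamma$, so $e'$ is an $\mathbf{M}$‑model of $\Gamma$, while $e(A) \nleqslant_\Gamma e(B)$ forces $e' \nmodels A \I B$; this completes the contrapositive. The only step that is not a routine transfer along $h$ is checking that the downset construction really yields a complete residuated lattice — concretely, that $\star$ preserves arbitrary joins (which is what secures the residuum) and that its neutral element is the image of the top of $\mathbf{L}/\Gamma$ — and this is exactly where I expect the main, though still elementary, effort to lie.
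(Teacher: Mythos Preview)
Your proposal is correct and follows essentially the same route as the paper: reduce the if-part to Theorem~\ref{th:cmp} by embedding the counterexample pomonoid into a complete residuated lattice via the downset construction, then transport the evaluation along the embedding. The only cosmetic difference is that the paper states the embedding for an arbitrary integral commutative pomonoid and outsources the verification that the downset lattice is residuated to a reference (Galatos, \cite[Lemma~3.39]{GaJiKoOn:RL}), whereas you instantiate it directly with $\mathbf{L}/\Gamma$ and sketch the verification yourself.
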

\begin{proof}
  The only-if part follows directly from the fact that from each
  complete residuated lattice
  $\mathbf{L} = \langle L,\wedge,\vee,\otimes,\rightarrow,0,1\rangle$ we
  can take its reduct $\langle L,\otimes,1\rangle$ and equip it with $\leq$
  defined by $a \leq b$ if{}f $a \rightarrow b = 1$. Clearly, 
  $\langle L,\leq,\otimes,1\rangle$ is a commutative integral pomonoid.
  Now, apply Theorem~\ref{th:snd}.

  In order to prove the if-part, it suffices to show that each commutative
  integral pomonoid can be embedded into a complete residuated lattice.
  The rest then follows by using Theorem~\ref{th:cmp}.
  Take any commutative integral pomonoid $\langle L,\leq,\otimes,1\rangle$.
  Consider the system $\mathcal{L}$ of all downward closed subsets of $L$
  with respect to $\subseteq$.
  It is well known that $\mathcal{L}$ with $\subseteq$ is
  a complete lattice. Put
  \begin{align*}
    X * Y &= \{z \in L;\,
    z \leq x \otimes y \text{ for some } x \in X \text{ and } y \in Y\}, \\
    X \rightarrow Y &= \{z \in L;\, X * \{z\} \subseteq Y\}.
  \end{align*}
  for any $X,Y \in \mathcal{L}$.
  Using the result of Galatos~\cite[Lemma~3.39]{GaJiKoOn:RL},
  $\mathbf{L} = \langle \mathcal{L},\cap,\cup,*,\rightarrow,\emptyset,L\rangle$
  is a complete residuated lattice and $h\!: L \to \mathcal{L}$ defined
  by $h(y) = \{x \in L;\, x \leq y\}$ is an embedding.
\end{proof}

We now turn our attention to the relationship of our rules and the classic FDs.
From the syntactic point of view, the classic FDs can be seen as MFDs in which
we allow to arbitrarily duplicate all occurrences of propositional variables.
From the semantic point of view, it turns out that FDs are just MFDs with
the semantics defined over the class of Boolean algebras. We show details
in the next theorem, where we use the following notation. For any $\Gamma$,
put
\begin{align}
  \GBool
  &= \Gamma \cup \{\alpha_p \I \alpha_p\alpha_p;\, p \in \mathrm{Var}\},
\end{align}
where $\alpha_p$ is defined as in~\eqref{eqn:Ap}. Now, we have:

\begin{theorem}[Boolean case extension]
  $\GBool \vdash A \I B$ if{}f\/
  $A \I B$ is satisfied by each $\mathbf{L}$-model of $\Gamma$,
  where $\mathbf{L}$ is the two-element Boolean algebra.
\end{theorem}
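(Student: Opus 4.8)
The plan is to prove the two implications separately: the ``only if'' direction is immediate from soundness, and the ``if'' direction is handled by a contrapositive model construction that mirrors the proof of Theorem~\ref{th:cmp}. Throughout, write $\mathbf{2}$ for the two-element Boolean algebra, regarded as the integral commutative pomonoid $\langle\{0,1\},\leq,\wedge,1\rangle$ (its $\otimes$ being $\wedge$), and write $\overline{E}=\{p\in\Var;\,E(p)>0\}$ for the support of a map $E$ of the form~\eqref{eqn:A}.

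For the ``only if'' direction, assume $\GBool\vdash A\I B$. By Theorem~\ref{th:snd} we get $\GBool\models A\I B$, so $A\I B$ holds in every $\mathbf{2}$-model of $\GBool$. It therefore suffices to observe that every $\mathbf{2}$-model of $\Gamma$ is a $\mathbf{2}$-model of $\GBool$: for any $\mathbf{2}$-evaluation $e$ we have $e(\alpha_p\alpha_p)=e(\alpha_p)\wedge e(\alpha_p)=e(\alpha_p)$, so each added axiom $\alpha_p\I\alpha_p\alpha_p$ is a $\mathbf{2}$-tautology, hence satisfied by $e$.

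The converse rests on one preliminary remark: \emph{provability by $\GBool$ depends only on supports.} The congruence argument from the proof of Theorem~\ref{th:cmp} shows that $\equiv_{\GBool}$ is a congruence of $\langle L,\cdot,\top\rangle$; since $\alpha_p\I\alpha_p\alpha_p\in\GBool$ and $\alpha_p\alpha_p\I\alpha_p$ is an instance of \Ax, we have $\alpha_p\equiv_{\GBool}\alpha_p\alpha_p$, hence $\alpha_p\equiv_{\GBool}\alpha_p^{\,k}$ for every $k\geq1$ by induction and congruence; consequently $E\equiv_{\GBool}F$ whenever $\overline{E}=\overline{F}$, and so $\GBool\vdash E\I F$ iff $\GBool\vdash E'\I F'$ whenever $\overline{E}=\overline{E'}$ and $\overline{F}=\overline{F'}$. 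Assume now $\GBool\nvdash A\I B$ and set $S=\overline{A}$; let $A_S$ be the $\{0,1\}$-valued map with support $S$. Define a $\mathbf{2}$-evaluation $e$ by $e(p)=1$ iff $\GBool\vdash A_S\I\alpha_p$. Because $\otimes=\wedge$ is idempotent in $\mathbf{2}$, one has $e(E)=1$ iff $e(p)=1$ for all $p\in\overline{E}$. Then: (i) $e(A)=1$, since for $p\in S$ the MFD $A_S\I\alpha_p$ is an instance of \Ax; (ii) $e$ is a model of $\Gamma$, for if $E\I F\in\Gamma$ and $e(E)=1$ then $\GBool\vdash A_S\I\alpha_p$ for all $p\in\overline{E}$, so iterating the weak additivity of Remark~\ref{rem:inf}\,(b) and replacing $A_S^{\,|\overline{E}|}$ and $\prod_{p\in\overline{E}}\alpha_p$ by their support-equivalents yields $\GBool\vdash A_S\I E$ (trivially so if $\overline{E}=\emptyset$), whence $\GBool\vdash A_S\I F$ by \Tra{} and $\GBool\vdash A_S\I\alpha_q$ for every $q\in\overline{F}$ by \Pro, i.e.\ $e(F)=1$; (iii) $e(B)=0$, for otherwise $e(q)=1$ for all $q\in\overline{B}$, and the same bundling argument gives $\GBool\vdash A_S\I B$, hence $\GBool\vdash A\I B$ since $A\equiv_{\GBool}A_S$, contradicting the assumption. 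Thus $e(A)=1>0=e(B)$, so $e\nmodels A\I B$, which establishes the contrapositive of the ``if'' direction.

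The only delicate point is the converse, and within it the control of multiplicities: one must know that $\GBool$ collapses every power to its underlying variable (the preliminary remark), so that the weak additivity of Remark~\ref{rem:inf}\,(b)---which by itself yields only $A^k\I B_1\cdots B_k$---suffices to recover full additivity \emph{modulo} $\GBool$, and likewise so that \Pro{} legitimately extracts individual consequents. Everything else is routine bookkeeping paralleling Theorem~\ref{th:cmp}. An alternative, less self-contained route is to project each $\GBool$-proof onto supports, obtaining an Armstrong-style derivation of $\overline{A}\I\overline{B}$ from $\overline{\Gamma}$, and then to combine classical completeness of the Armstrong rules with the standard equivalence between classical FD entailment and entailment over $\mathbf{2}$.
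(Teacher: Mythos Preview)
Your proof is correct, but it takes a genuinely different route from the paper's for the ``if'' direction. The paper proceeds structurally: it revisits the Lindenbaum-style quotient $\mathbf{L}/\GBool$ from Theorem~\ref{th:cmp}, observes (as you do) that $E \equiv_{\GBool} E^n$ so that $\circ$ is idempotent, then checks that $\leqslant_{\GBool}$ coincides with the induced meet-semilattice order. Having refuted $A \I B$ in this semilattice model, it embeds the semilattice into a Boolean algebra of sets and uses the subdirect decomposition into two-element Boolean algebras to extract a $\mathbf{2}$-model of $\Gamma$ with $e(A)=1$, $e(B)=0$. Your argument instead builds the $\mathbf{2}$-model by hand via the classical closure construction: put $e(p)=1$ iff $\GBool \vdash A_S \I \alpha_p$, and verify directly---leaning on your ``supports only'' remark to recover full additivity from weak additivity---that this is a model of $\Gamma$ refuting $A \I B$. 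Your route is more elementary and self-contained (no appeal to Birkhoff/Stone-type representation), and it makes the connection to the standard FD completeness proof via $A^+$ transparent; the paper's route is more uniform with the general completeness argument and emphasizes the algebraic picture that $\GBool$ collapses the quotient to a semilattice. Both rest on the same key observation, namely that the idempotency axioms force $\equiv_{\GBool}$ to depend only on supports.
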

\begin{proof}
  The only-if part is easy to see since $\otimes$ in the two-element Boolean
  algebra is the truth function of the classic conjunction which is idempotent.
  In order to see the if-part, inspect the proof of Theorem~\ref{th:cmp} and
  observe that $E \equiv_{\GBool} E^n$ for any $E$
  and any natural $n$.
  Indeed, $\GBool \vdash E^n \I E$ follows from \Ax{}
  while $\GBool \vdash E \I E^n$ results by
  a repeated application of
  \vspace*{-.8\baselineskip}%
  \begin{align*}
    \CUT{E \I EE}{\CUT{E \I EE}{\AX{EEE \I EEE}}{EE \I EEE}}{E \I EEE}.
  \end{align*}
  Therefore, the operation $\circ$ in $\mathbf{L}/\GBool$ is idempotent
  and thus $\langle L/\GBool,\circ,[\top]_{\GBool}\rangle$ is a semilattice.
  In addition, we can show that $\leqslant_{\GBool}$ coincides with the
  meet-semilattice order induced by $\circ$. In order to see that, it suffices
  to show $[E]_{\GBool} \leqslant_{\GBool} [F]_{\GBool}$ if{}f
  $[E]_{\GBool} \circ [F]_{\GBool} = [E]_{\GBool}$. The latter condition
  can be rewritten as $[EF]_{\GBool} = [E]_{\GBool}$ which is true if{}f
  $EF \equiv_{\GBool} E$, i.e., if $\GBool \vdash EF \I E$
  and $\GBool \vdash E \I EF$. Since $EF \I E$ is an instance of \Ax,
  it suffices to check that $\GBool \vdash E \I F$ if{}f
  $\GBool \vdash E \I EF$ which is indeed the case:
  The if-part follows by
  \vspace*{-.8\baselineskip}%
  \begin{align*}
    \CUT{E \I EF}{\AX{EF \I F}}{E \I F}
  \end{align*}
  and the only-if part follows by
  \vspace*{-.8\baselineskip}%
  \begin{align*}
    \CUT{E \I EE}{\CUT{E \I F}{\AX{FE \I EF}}{EE \I EF}}{E \I EF}.
  \end{align*}
  As a consequence,
  if $\GBool \nvdash A \I B$ then there is an $\mathbf{L}/\GBool$-model $e$ of $\GBool$
  such that $e(A) \circ e(B) \ne e(A)$,
  where $\mathbf{L}/\GBool = \langle L/\GBool,\circ,[\top]_{\GBool}\rangle$
  is a meet-semilattice. Using standard arguments, $\mathbf{L}/\GBool$ can be
  embedded into a (complete) Boolean algebra $\mathbf{L}'$ of sets which is
  a subdirect product of two-element Boolean algebras~\cite{Bir:LT}.
  Hence, for the two-element Boolean algebra $\mathbf{L}$ on $\{0,1\}$
  with $0 < 1$ there must be an $\mathbf{L}$-evaluation $e$ which is
  a model of $\GBool$, $e(A) = 1$, and $e(B) = 0$, proving the claim.
\end{proof}

\section{Computational Issues}\label{sec:comput}
In this section, we discuss computational issues of the logic of
monoidal functional dependencies. We start by observing that the logic
is decidable and show that the provability in our logic may be expressed
as reducibility in an abstract rewriting system~\cite{Wec:UAfCS}. Based on
that, we show that for theories consisting of formulas in a special form,
there is a polynomial closure-like algorithm for deciding whether $A \I B$
is provable by a finite $\Gamma$.

\begin{theorem}\label{thm:dec}
  If $\Gamma$ is finite, then
  $\Gamma^\vdash = \{A \I B;\, \Gamma \vdash A \I B\}$ is decidable.
\end{theorem}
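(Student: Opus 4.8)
The plan is to show that provability $\Gamma \vdash A \I B$ can be decided by a terminating search, and the cleanest route is via the rewriting-system viewpoint announced just before the theorem. First I would observe that by the soundness and completeness theorem (Theorem~\ref{th:cmp}), deciding $\Gamma \vdash A \I B$ is the same as deciding $\Gamma \models A \I B$, but that semantic formulation quantifies over a proper class of pomonoids and is not directly a decision procedure; so instead I work purely syntactically. The key idea is to bound the search space. Using the $\Ref$, $\Rwt$, $\Pro$ system from Remark~\ref{rem:inf}(c) (equivalent to $\Ax$, $\Cut$), one sees that from $\Gamma$ one can only ever derive MFDs whose antecedent and consequent are ``built from'' the finitely many variables occurring in $\Gamma$ together with those occurring in the target $A \I B$; let $V$ be that finite variable set.

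Next I would set up the abstract rewriting system. Fix the finite theory $\Gamma = \{C_1 \I D_1, \ldots, C_k \I D_k\}$ and the target antecedent $A$. Define a rewrite relation $\rightharpoonup_\Gamma$ on multisets over $V$ by: $X \rightharpoonup_\Gamma Y$ if $Y$ is obtained from $X$ by replacing an occurrence of $C_i$ (as a sub-multiset, i.e.\ $C_i(p) \le X(p)$ for all $p$) by $D_i$, that is $Y = (X \setminus C_i) D_i$ in multiset notation. Then I claim $\Gamma \vdash A \I B$ iff there is a finite reduction $A = X_0 \rightharpoonup_\Gamma^* X_m$ with $B \le X_m$ (pointwise), i.e.\ $X_m = B B'$ for some $B'$. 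The forward direction is an induction on proof length: each use of a theory axiom $C_i \I D_i$ together with $\Cut$ corresponds to one rewrite step, and $\Ax$ corresponds to the final ``weakening'' $B \le X_m$; the reverse direction reassembles such a reduction into a proof using $\Cut$ and $\Ax$, essentially the derivation exhibited in Remark~\ref{rem:inf}(c). This reduces decidability of $\Gamma^\vdash$ to reachability in $\rightharpoonup_\Gamma$.

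The remaining point, and the main obstacle, is that $\rightharpoonup_\Gamma$ need not be terminating on its own: a rule like $p \I pp$ (an instance of $\Ax$-style growth appearing in $\GBool$) or $p \I q$ with $q \I p$ also in $\Gamma$ lets multisets grow without bound, so a naive reachability search does not obviously halt. To handle this I would bound the total size of multisets that need to be considered. The crucial combinatorial fact is that if there is any reduction $A \rightharpoonup_\Gamma^* X$ with $B \le X$, then there is one in which every intermediate multiset $X_j$ has total size bounded by some computable $N = N(\Gamma, A, B)$ --- intuitively, if some variable count in an intermediate term exceeds what is ever ``consumed'' by a later rule application or needed to cover $B$, the extra copies can be deleted and the reduction still goes through (the rules only ever require a sub-multiset on the left, so shrinking the left-hand context is harmless for the remaining steps except where a rule needs those copies, and one only needs finitely many copies for that). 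Making this precise is the real work: one shows that along a shortest reduction witnessing $B \le X_m$, each $X_j$ has size at most $|B| + m \cdot \max_i |D_i|$ and $m$ itself can be bounded, so the set of reachable-and-relevant multisets is finite and can be enumerated by breadth-first search. I would state this bound as a lemma, prove it by an exchange/pruning argument on reductions, and conclude that $\Gamma^\vdash$ is decidable --- indeed the search is effective, which is exactly what is needed here (sharper complexity being the subject of the later part of the section for restricted $\Gamma$).
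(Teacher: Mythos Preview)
Your approach differs substantially from the paper's. The paper's proof is semantic and very short: $\Gamma^\vdash$ is obviously recursively enumerable, and its complement is recursively enumerable because, by Theorem~\ref{th:resl} together with the finite embeddability property of residuated lattices~\cite{BlAl:FEP} (hence the strong finite model property), every $A \I B$ with $\Gamma \nvdash A \I B$ is refuted in some \emph{finite} residuated lattice, and finite structures can be enumerated. No combinatorics on proofs or rewriting is needed.

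Your syntactic route is viable in principle. The equivalence ``$\Gamma \vdash A \I B$ iff $A \rightharpoonup_\Gamma^* X$ for some $X \ge B$'' is exactly Theorem~\ref{th:rewr} (which in the paper comes \emph{after} the decidability theorem, not before it), and the resulting question is precisely the coverability problem for vector addition systems / Petri nets: places are the variables in $V$, transitions are the rules $C_i \I D_i$, initial marking $A$, target $B$. Coverability is classically decidable (Karp--Miller; EXPSPACE by Rackoff), so the reduction does yield the theorem, and unlike the paper's argument it would give explicit complexity bounds.

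However, the bounding argument you actually sketch has a genuine gap. You assert that ``$m$ itself can be bounded'' via exchange/pruning, but the pruning you describe is circular: deleting ``extra'' copies of a variable is justified by bounding how many copies the remaining rule applications consume, and that bound already depends on the number of remaining steps, i.e.\ on~$m$. The only pruning that works directly is cutting between $X_i$ and $X_j$ with $X_j \le X_i$ (replay the suffix from $X_j$ starting at the larger $X_i$); this shows a shortest covering path contains no such pair, but Dickson's lemma only guarantees an \emph{increasing} pair $X_i \le X_j$ in long sequences, and those cannot be cut. Bounding $m$ genuinely requires the Karp--Miller tree construction or Rackoff's induction on the number of places---neither is the simple pruning you describe. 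You should either cite the Petri-net coverability result directly or replace the hand-waved lemma with one of those standard (but non-trivial) arguments.
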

\begin{proof}
  Given a finite $\Gamma$, its deductive closure
  $\Gamma^\vdash = \{A \I B;\, \Gamma \vdash A \I B\}$
  is obviously recursively enumerable. In addition, using Theorem~\ref{th:resl}
  and the fact that the variety of residuated lattices has the
  \emph{finite embeddability property}~\cite{BlAl:FEP}
  (every finite partial residuated sublattice
  can be embedded into a finite residuated lattice) and therefore the
  \emph{strong finite model property} (every quasi-identity that fails in
  a residuated lattice fails in some finite one),
  we conclude that
  \begin{align*}
    \{A \I B;\, \Gamma \nmodels A \I B\} =
    \{A \I B;\, A \I B \not\in \Gamma^\vdash\}
  \end{align*}
  is recursively enumerable.
  As a consequence, it is decidable whether $A \I B$ is
  provable by a finite $\Gamma$.
\end{proof}

As a consequence of Theorem~\ref{thm:dec}, we obtain a naive approach
to decide whether $A \I B$ is provable by a finite $\Gamma$ which consists
in enumerating all proofs by $\Gamma$ and, simultaneously, generating
finite residuated lattices to find counterexamples. The enumeration of
proofs can be simplified since finding a proof of $A \I B$ may be seen
as a process in which we sequentially reduce $A$ in finitely many steps
using formulas in $\Gamma$. In order to formalize the rewriting process,
to each $\Gamma$ we associate a rewriting system
$\langle \mathcal{A},\rewrites\rangle$
where $\mathcal{A}$ is the set of all maps of the form~\eqref{eqn:A} and
$\rewrites$ is a binary relation on $\mathcal{A}$ such that
\begin{align}
  A \rewrites B
  \label{eqn:red}
\end{align}
for $A,B \in \mathcal{A}$ whenever the following conditions are satisfied:
\begin{enumerate}\parskip=0pt
\item
  $A = EG$ for some $E,G \in \mathcal{A}$,
\item
  $E \I F \in \Gamma$, and
\item
  $B = FG$.
\end{enumerate}
The transitive and reflexive closure $\reduces$ of
$\rewrites$ is called the \emph{reducibility} by $\Gamma$.
The basic relationship between the provability by $\Gamma$ and $\reduces$ is
described by the following assertion.

\begin{theorem}\label{th:rewr}
  $\Gamma \vdash A \I B$ if{}f
  there is $C \in \mathcal{A}$ such that $A \reduces BC$.
\end{theorem}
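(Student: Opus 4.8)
The plan is to prove both directions by induction, exploiting the tight correspondence between the single non-trivial inference rule $\Cut$ and single rewriting steps $\rewrites$.

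\medskip

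\noindent\textbf{The if-direction.} Suppose $A \reduces BC$ for some $C \in \mathcal{A}$. I would argue by induction on the length of the reduction sequence $A = A_0 \rewrites A_1 \rewrites \cdots \rewrites A_k = BC$. For the base case $k=0$ we have $A = BC$, and then $A \I B$, being an instance of $\Ax$, is provable. For the inductive step, the first rewriting step $A = EG \rewrites FG = A_1$ uses some $E \I F \in \Gamma$ with $A = EG$. By the induction hypothesis applied to $A_1 = FG \reduces BC$, we have $\Gamma \vdash FG \I B$. Now combine $E \I F \in \Gamma$ with $FG \I B$ via $\Cut$ (taking $C := G$, $D := B$ in the rule schema) to obtain $\Gamma \vdash EG \I B$, i.e.\ $\Gamma \vdash A \I B$. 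This part should be routine.

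\medskip

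\noindent\textbf{The only-if direction.} This is the direction I expect to require the most care, so I would handle it by induction on the length of a proof of $A \I B$ by $\Gamma$, strengthening the statement so the induction goes through: I would prove that if $\Gamma \vdash A \I B$ then there is $C \in \mathcal{A}$ with $A \reduces BC$. There are three cases for the last step of the proof. If $A \I B \in \Gamma$, then $A \rewrites B = B\top$ in one step (condition 1 with $E = A$, $G = \top$; condition 2 with $A \I B \in \Gamma$; condition 3 with $F = B$), so take $C = \top$. If $A \I B$ is an instance of $\Ax$, so $A = B'B$ with $B' \in \mathcal{A}$ (I am using the characterization that instances of $\Ax$ are exactly the MFDs $AB \I B$), then $A \reduces BB'$ trivially by zero steps, so take $C = B'$. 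Finally, if $A \I B$ arises by $\Cut$ from $A' \I B'$ and $B'C' \I B$ with $A = A'C'$, the induction hypothesis gives $C_1, C_2 \in \mathcal{A}$ with $A' \reduces B'C_1$ and $B'C' \reduces BC_2$. Then I would append $C'$ throughout the first reduction to get $A = A'C' \reduces B'C_1C'$, and append $C_1$ throughout the second reduction to get $B'C'C_1 \reduces BC_2C_1$; concatenating yields $A \reduces BC_2C_1$, so take $C = C_2C_1$.

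\medskip

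\noindent The main obstacle is the bookkeeping observation that reductions are stable under appending a fixed multiset: if $A \rewrites B$ then $AD \rewrites BD$ for any $D \in \mathcal{A}$, hence $A \reduces B$ implies $AD \reduces BD$. This follows immediately from the definition of $\rewrites$ (if $A = EG$ then $AD = E(GD)$, and $B = FG$ gives $BD = F(GD)$ with the same $E \I F \in \Gamma$), together with associativity and commutativity of the multiset operation $\cdot$ from~\eqref{eqn:AB}. With this lemma in hand, the only-if direction's $\Cut$ case goes through cleanly; without explicitly isolating it, the manipulation of reduction sequences in that case looks deceptively fiddly.
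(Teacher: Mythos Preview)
Your proof is correct, and in the only-if direction it is in fact more elementary than the paper's argument. The paper's if-direction is essentially the same as yours (it phrases the induction using the derived rules $\Rwt$ and $\Pro$ rather than $\Cut$, maintaining the invariant $\Gamma \vdash A \I D_i$ along the reduction and finishing with $\Pro$, but this is just a cosmetic difference).

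For the only-if direction, however, the paper takes a considerably longer route: it first passes to the alternative inference system $\Ref$, $\Rwt$, $\Pro$, and then argues that any proof can be brought into a normal form analogous to Maier's RAP-derivation sequences (a single initial $\Ref$, a block of $\Rwt$-steps each of which has antecedent $A$ and uses a formula from $\Gamma$, and a single final $\Pro$). From such a normalized proof the reduction $A \reduces BC$ can be read off directly. Your approach bypasses all of this by inducting straight on the original $\Ax$/$\Cut$ proof, using the simple context lemma that $X \reduces Y$ implies $XD \reduces YD$. What you gain is a self-contained argument that does not rely on the alternative rule system or on proof normalization borrowed from \cite{Mai:TRD}; what the paper's approach buys is an explicit normal form for proofs, which has some independent interest and mirrors the classical RAP machinery.
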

\begin{proof}
  Assume that there is $C \in \mathcal{A}$ such that $A \reduces BC$.
  By definition of $\reduces$, there are $A = D_0,\ldots,D_k = BC$ such that
  $D_0 \rewrites D_1 \rewrites \cdots \rewrites D_k$. By induction, assume
  that $\Gamma \vdash A \I D_i$ and observe that $D_i \rewrites D_{i+1}$
  means that $D_i = EG_i$ and $D_{i+1} = FG_i$ for some $E \I F \in \Gamma$
  and $G_i \in \mathcal{A}$. Therefore, from $A \I D_i$ we can
  infer $A \I D_{i+1}$ by $\Rwt$ and so $\Gamma \vdash A \I D_{i+1}$ because
  $\Rwt$ is a derived inference rule, cf. Remark~\ref{rem:inf}\,(c).
  Therefore, $\Gamma \vdash A \I D_k$ means $\Gamma \vdash A \I BC$ and so
  $\Gamma \vdash A \I B$ by $\Pro$.

  Conversely, we first argue that if $\Gamma \vdash A \I B$ then there is
  a proof $\varphi_1,\ldots,\varphi_n$ of $A \I B$ by $\Gamma$ which uses
  only the inference rules $\Ref$, $\Rwt$, and $\Pro$.
  In addition, we claim that the proof can be found so that
  the following additional properties are all satisfied:
  \begin{enumerate}\parskip=0pt
  \item
    $\varphi_1$ is $A \I A$ and it is the only instance of $\Ref$ in the proof;
  \item
    each $\varphi_i$ such that $1 < i< n$ is a formula in one of the following
    forms:
    \begin{enumerate}\parskip=0pt
    \item
      $\varphi_i \in \Gamma$, or
    \item
      $\varphi_i$ results by $\Rwt$ applied to some $\varphi_j$ ($j < i$)
      of the form $A \I X$ for some $X \in \mathcal{A}$ and
      a formula in $\Gamma$;
    \end{enumerate}
  \item
    $\varphi_n$ results from $\varphi_{n-1}$ by $\Pro$ and it is the only
    application of $\Pro$ used in the proof.
  \end{enumerate}
  Using the arguments in Remark~\ref{rem:inf}\,(c), there indeed is 
  a proof of $A \I B$ by $\Gamma$ which uses only $\Ref$, $\Pro$, and $\Rwt$.
  It remains to show that the proof may
  be transformed into a proof satisfying 1.--3. This can be shown using
  analogous arguments as in~\cite[Theorem 4.2]{Mai:TRD} which shows this
  in the classic setting with the rule of accumulation instead of $\Rwt$
  and proves the existence of the so-called RAP-derivation sequences,
  cf. also~\cite{Ma:Mcrdm}. A moment's reflection shows that the procedure
  in the proof of~\cite[Theorem 4.2]{Mai:TRD} may be carried over with 
  the weaker rule $\Rwt$ by performing the following steps during which we
  \begin{itemize}\parskip=0pt
  \item
    add $A \I A$ at the beginning of the proof (if it is not there);
  \item
    add an application of $\Pro$ at the end of the proof (if it is not there);
  \item
    eliminate all applications of $\Pro$ except for the last one using
    the argument that $\Pro$ commutes with $\Rwt$ and therefore a formula derived
    by first using $\Pro$ and then using $\Rwt$ may be derived by first
    using $\Rwt$ and then using $\Pro$;
  \item
    eliminate applications of $\Rwt$ which do not conform to either of
    (a) and (b) specified above by substituting each such an application by
    a series of applications of $\Rwt$ which yield formulas with $A$ as
    the antecedent and use only formulas in $\Gamma$.
    This can be done by going backwards
    through the proof and using the observation that
    \begin{align*}
      &\RWT{A \I DE}{\xRWT{E \I FG}{G \I H}{E \I FH}}{A \I DFH},
    \end{align*}
    can equivalently be expressed as
    \begin{align*}
      &\RWT{\xRWT{A \I DE}{E \I FG}{A \I DFG}}{G \I H}{A \I DFH},
    \end{align*}
    cf.~\cite[Theorem 4.2]{Mai:TRD}.
  \end{itemize}
  At this point we have shown that if $\Gamma \vdash A \I B$ then there is
  a proof $\varphi_1,\ldots,\varphi_n$ of $A \I B$ by $\Gamma$
  satisfying 1.--3. Let $A \I X_1,\ldots,A \I X_k$ be the subsequence of
  $\varphi_1,\ldots,\varphi_n$ which consists of all formulas
  with the antecedent $A$. By induction, we prove that
  $A \reduces X_i$ for all $i=1,\ldots,k$.
  We distinguish three cases.
  First, if $X_i = A$, then trivially $A \reduces X_i$.
  Second, if $A \I X_i \in \Gamma$,
  then directly by the definition of $\rewrites$, we get $A \rewrites X_i$
  and so $A \reduces X_i$. Third, if $A \I X_i$ results from $A \I X_j$ 
  (for some $j < i$) and some $E \I F \in \Gamma$ by $\Rwt$, then 
  $X_j = EG$ and $X_i = FG$ for some $G \in \mathcal{A}$ and so
  $X_j \rewrites X_i$, meaning $A \reduces X_j \rewrites X_i$, i.e.,
  $A \reduces X_i$. Altogether, $A \reduces X_i$ for all $i=1,\ldots,k$
  and as a special case for $i = k$, we get $A \reduces X_k = BC$
  for some $C \in \mathcal{A}$ because $A \I B$, being the last formula in
  $\varphi_1,\ldots,\varphi_n$, results from $A \I X_k$ by $\Pro$.
\end{proof}

Theorem~\ref{th:rewr} may be used to find proofs of $A \I B$ by
a finite $\Gamma$ in a more convenient way than the naive approach because
instead of storing proofs, one can just store representations of maps of
the form~\eqref{eqn:A} and in order to find a proof one may perform
a breadth-first search through a (possibly infinite) tree of derivations
starting with $A$. Needless to say, the procedure is still very expensive
because the memory consumed by the process can grow exponentially.
More importantly, in general it is still necessary to simultaneously
generate counterexamples in order to decide whether $A \I B$ follows
by $\Gamma$ because the search space is infinite.

In the rest of this section, we show that considerably more efficient
decision procedures may be found in case of theories consisting only
of particular formulas. We describe a procedure which exploits the
rewriting process and the result of Theorem~\ref{th:rewr} and which
resembles the well-known \textsc{Closure}
algorithm~\cite[Algorithm 4.2]{Mai:TRD}. We confine ourselves only
to so-called non-contracting theories.

A formula $A \Rightarrow B$ is called \emph{non-contracting} whenever $B$
can be written as $AC$ for some $C$. A theory $\Gamma$ is non-contracting
whenever all its formulas are non-contracting.

Clearly, if $\Gamma$ is non-contracting and $A \reduces B$,
then $A(y) \leq B(y)$ for all $y \in \Var$. From the point of view of
the inference rules, $\Rwt$ applied to non-contracting formulas acts
like the classic accumulation rule. In contrast to the classic properties
of closures of sets of attributes, there still is no guarantee that
for $A$ there is a greatest $B$ such that $A \reduces B$. Nevertheless,
for non-contracting theories, we may propose an algorithm as
in Figure~\ref{fig:alg} which generalize the well-known
algorithm \textsc{Member}~\cite[Algorithm 4.3]{Mai:TRD}.

\begin{figure}[p]
  \IncMargin{1em}%
  \SetKwInput{Input}{Input}%
  \SetKwInput{Output}{Output}%
  \begin{algorithm}[H]
    \Input{a finite non-contracting theory $\Gamma$ and
      a formula $A \I B$}
    \Output{boolean value}
    \BlankLine
    $\Delta \assign \Gamma \cup \{B \I By\}$\tcc*[r]{$y \in \Var$ is unused in $A,B,\Gamma$}
    $W \assign A$\tcc*[r]{$W$ is auxiliary map~\eqref{eqn:A}}
    $N \assign \textstyle\sum_{E \I F \in \Delta}\sum_{p \in \Var}E(p)$\!\!
    \tcc*[r]{counter}
    \BlankLine
    \Repeat{$L = W$ or $N \leq 0$ or $W(y) > 0$}{%
      $L \assign W$\tcc*[r]{$L$ is the last value of $W$}
      \ForEach{$E \I F \in \Delta$}{%
        \If{$W = EX$ for some $X$ of the form \eqref{eqn:A}}{%
          $W \assign FX$\tcc*[r]{update of $W$}
        }
      }
      $N \assign N - 1$\tcc*[r]{decrease the counter}
    }
    \BlankLine
    \eIf{$W(y) > 0$}{
      \Return{true};
    }{
      \Return{false};
    }
  \end{algorithm}
  \caption{Algorithm for deciding $\Gamma \vdash A \I B$
    for non-contracting $\Gamma$.}
  \label{fig:alg}
\end{figure}

The algorithm in Figure~\ref{fig:alg} accepts a finite non-contracting
theory $\Gamma$ and arbitrary formula $A \I B$ as its input. It is obvious
that the algorithm terminates after finitely many steps
(check the condition at line 12) and returns
a value \emph{true} or \emph{false}. The following assertion shows
that the algorithm decides $\vdash$.

\begin{theorem}
  The algorithm in Figure~\ref{fig:alg} is correct:
  For a non-contracting finite~$\Gamma$, the algorithm terminates after finitely
  many steps and returns ``true'' if{}f $\Gamma \vdash A \I B$.
\end{theorem}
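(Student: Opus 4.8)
The plan is to establish the two directions of the biconditional separately, using Theorem~\ref{th:rewr} to translate provability into the reducibility relation $\reduces$, and then to track the behavior of the auxiliary map $W$ in the algorithm against reductions in the system $\langle\mathcal{A},\rewrites\rangle$ associated with $\Delta = \Gamma \cup \{B \I By\}$.

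First I would record the key invariant: after each pass through the \textbf{repeat} loop, the current value of $W$ satisfies $A \reducesd W$, where $\reducesd$ is reducibility by $\Delta$. This holds initially since $W = A$ and $A \reducesd A$; each update $W \assign FX$ when $W = EX$ with $E \I F \in \Delta$ is exactly one $\rewrites$-step by $\Delta$, so the invariant is preserved. Because $\Gamma$ (hence $\Delta$) is non-contracting, every $\rewrites$-step is non-decreasing on each coordinate, so the sequence of values of $W$ is monotone nondecreasing in the pointwise order $\leqslant$; moreover each coordinate of $W$ is bounded (a reduction by a non-contracting rule only adds what the rule's consequent prescribes beyond its antecedent, and the total ``mass'' that can be introduced is controlled by $N$, the sum of all antecedent weights in $\Delta$, which is why the counter $N$ is a legitimate termination bound). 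This gives termination and also shows that on exit either $W$ has stabilized or $W(y) > 0$.

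Next, for the if-part ($\Gamma \vdash A \I B$ implies the algorithm returns true): if $\Gamma \vdash A \I B$ then a fortiori $\Delta \vdash A \I B$, so by Theorem~\ref{th:rewr} there is $C$ with $A \reducesd BC$. Applying the rule $B \I By \in \Delta$ once more gives $A \reducesd ByC$, so there is a $\Delta$-reduction from $A$ to a map with a positive value at $y$. The point is then that the algorithm's loop is ``confluent enough'': because $\Delta$ is non-contracting, whenever $A \reducesd D$ and $D$ is not yet a fixed point of applying every rule in $\Delta$ in sequence, one more full pass strictly increases $\sum_p W(p)$, and the stable value $W^\ast$ the loop converges to dominates every $D$ with $A \reducesd D$ — here I would use the non-contracting property to argue that applying a rule can never ``use up'' material needed later, so the greedy left-to-right application in the \textbf{foreach} loop loses nothing. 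Hence $W^\ast(y) \geq (ByC)(y) > 0$, forcing the early-exit test $W(y) > 0$ to eventually fire, and the algorithm returns true. For the only-if part, if the algorithm returns true then at exit $W(y) > 0$; by the invariant $A \reducesd W$, and since $y$ does not occur in $A$ or in any antecedent of $\Gamma$, the only way a positive $y$-coordinate can appear is through at least one use of the rule $B \I By$, which in particular (by inspecting where $B\I By$ can be applied, namely to a map of the form $BX$) witnesses $A \reducesd BX$ for some $X$; then Theorem~\ref{th:rewr} gives $\Delta \vdash A \I B$, and since the extra formula $B \I By$ contains the fresh variable $y$ which occurs nowhere in $A \I B$ or $\Gamma$, a routine $y$-elimination argument (replace $y$ throughout any proof by $\top$, using that $\top$-substitution turns $B \I By$ into the \Ax{} instance $B \I B$) shows $\Gamma \vdash A \I B$.

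The main obstacle I expect is the confluence-style claim in the if-part: that the final stabilized $W^\ast$ actually dominates \emph{every} $D$ reachable by $\reducesd$ from $A$, so that no reachable positive $y$-coordinate is missed. In the classic \textsc{Closure}/\textsc{Member} setting this is immediate from idempotence of union and the existence of a greatest closure, but here there is no greatest $B$ with $A \reducesd B$ in general (as the text emphasizes), so I cannot simply invoke that. The fix is to exploit non-contraction carefully: show that if $A \reducesd D$ via a rule sequence $E_1\I F_1,\dots,E_m\I F_m$, then the same rules can be ``replayed'' against the algorithm's $W$ without conflict — each $E_j \I F_j$ remains applicable because non-contraction guarantees $W$ only grows and hence still contains $E_j$ when needed — so $W$ eventually reaches a map $\geqslant D$. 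I would also need to double-check the termination bound: that $N = \sum_{E \I F \in \Delta}\sum_p E(p)$ passes really do suffice, i.e. that $\sum_p W(p)$ strictly increases on each non-stabilizing pass by at least one, which follows because some applied non-contracting rule with $E \neq F$ strictly enlarges the multiset; combined with the coordinatewise upper bound this caps the number of productive passes.
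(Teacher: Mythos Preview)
Your proposal rests on a false picture of the algorithm's dynamics. You write that ``each coordinate of $W$ is bounded'' and that there is a ``stable value $W^\ast$ the loop converges to,'' but neither holds: with $\Gamma = \{p \I pp\}$ and $A = p$, the foreach loop applies $p \I pp$ once per pass, so $W$ runs through $p, p^2, p^3, \ldots$ without bound and never stabilizes. The counter $N$ does not cap the mass added to $W$; it merely halts the repeat-until loop after $N$ passes regardless of whether $W$ is still changing. Consequently your exit analysis ``on exit either $W$ has stabilized or $W(y) > 0$'' omits precisely the case that carries the content of the theorem: the loop exits with $N \leq 0$, $L \neq W$, and $W(y) = 0$, and you must argue that this forces $\Gamma \nvdash A \I B$.

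Because there is no limit $W^\ast$, the confluence claim that it ``dominates every $D$ with $A \reducesd D$'' has no referent. Your replay idea can indeed be made to show that after $k$ passes the algorithm's $W$ dominates the $k$th term of any fixed $\Delta$-reduction $A = D_0 \rightharpoonup_\Delta D_1 \rightharpoonup_\Delta \cdots$ (this does use non-contraction as you suggest), but that only yields $W(y) > 0$ within $N$ passes provided some reduction $A \reducesd BC$ has length at most $N$, and you supply no argument for such a bound. The paper proceeds differently: it first observes that once any rule becomes applicable it remains applicable forever (non-contraction again), so $L = W$ can occur only at the very first pass, in which case no rule in $\Delta$ applies to $A$ and $\Gamma \nvdash A \I B$ follows immediately from Theorem~\ref{th:rewr}; then, for the $N = 0$ exit, it bounds the number of passes needed before each successive rule in $\Delta$ first becomes applicable by the total weight of that rule's antecedent, so that $B \I By$ fires within $\sum_{E \I F \in \Delta}\sum_p E(p) = N$ passes if it can fire at all.
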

\begin{proof}
  The algorithm uses $W$ as an auxiliary variable which represents
  a working multi-set in $\Var$ whose initial value is $A$ (see line~2).
  In addition, $\Delta$ is set to $\Gamma$ which is extended by
  a formula $B \I By$, see line 1, where $y$ is a fresh new propositional variable
  which does not appear in either formula in $\Gamma$ or in $A \I B$.
  Recall that using the abbreviated notation for~\eqref{eqn:Ap},
  for the consequent $By$ of $B \I By$ we have $By(y) = 1$ and $By(z) = B(z)$
  for all $z \ne y$.
  The algorithm utilizes an additional counter $N$ which is initially
  set to the total number of occurrences of propositional variables in all
  antecedents in $\Delta$, see line 3.

  The repeat-unit loop updates $W$ as long as it can be updated
  (the auxiliary variable $L$ is used to detect no update) based on the formulas
  in $\Delta$ and the property which is maintained after each update
  is that $A \reducesd W$. This is the same as in the ordinary \textsc{Closure}.
  Whenever an antecedent of a formula in $\Delta$ is contained in $W$,
  its consequent is added to $W$, see line 8.

  We now inspect the halting condition of the repeat-until loop.
  If $W(y) > 0$, it means that $B \I By$ has been used in line 8.
  Therefore, $A \reduces W$ such that $W = BX$ for some $X$
  and thus $\Gamma \vdash A \I B$ in which case
  the algorithm returns \emph{true}. If the repeat-until loop terminates
  and we have $W(y) = 0$, \emph{false} is returned. It suffices to show
  that in this case $\Gamma \nvdash A \I B$. To see this, observe that
  if $E \I F \in \Delta$ passes the condition in line 7, then it passes
  the condition in all consecutive iterations of the loop and $W$ is
  repeatedly updated by this formula (this is because all formulas
  in $\Delta$ are non-contracting, so the antecedent of $E \I F$
  cannot ``vanish'' from $W$). As a consequence, if $L \ne W$ holds
  when the algorithm reaches line 12 for the first time, then $L \ne W$
  for all consecutive iterations. Therefore, the repeat-until loop can be
  terminated because of $L = W$ only at the end of the first iteration
  in which case there is no formula in $\Delta$
  which may update the value of $W$ and so $\Gamma \nvdash A \I B$.

  Let us assume that $L \ne W$, $W(y) = 0$, and $N = 0$.
  We use the argument that $B \I By$ is either used to update $W$ (line 8)
  in $N$ steps with the initial value of $N$ given as in line 3,
  or it cannot be used to update $W$ at all. To see that, assume the
  worst case in which for $\Delta = \{E_1 \I F_1,\ldots,E_n \I F_n,B \I By\}$,
  only $E_1 \I F_1$ is used to update $W$ during the first $m_1$ iterations,
  then $E_1 \I F_1$ and $E_2 \I F_2$ are used simultaneously to update $W$
  during the next $m_2$ iterations, etc.,
  so that finally $B \I By$ is used to update~$W$. The key observation
  here is that $m_1$ cannot be strictly greater than the number of attributes in
  the antecedent of $E_2$ because in the worst case, the attributes (including
  their multiple occurrences) are added to $W$ one by one.
  That is, $m_1 \leq \sum_{p \in \Var}E_2(p)$ and analogously,
  $m_2 \leq \sum_{p \in \Var}E_3(p)$, etc.
  So, in the worst case, the use of $B \I By$ to update $W$ is bounded from
  above by
  \begin{align*}
    \textstyle
    \sum_{E \I F \in \Delta}\sum_{p \in \Var}E(p)
  \end{align*}
  iterations.
  As a conclusion, if $N$ initially set to the value in line 3 reaches $0$
  and $W(y) = 0$, there is no $X$ such that $A \reduces BX$,
  i.e., $\Gamma \nvdash A \I B$.
\end{proof}

\begin{remark}
  It is clear that the algorithm in Figure~\ref{fig:alg} is polynomial
  since it only represents an extension of \textsc{Closure} and \textsc{Member}
  which results in more iterations of the main loop than in
  the case of \textsc{Closure} but the number of iterations is bounded by
  the size of the input. In fact, our algorithm has quadratic worst-case
  time complexity, the same as \textsc{Closure}.
\end{remark}

\begin{figure}
  \centering
  \begin{tabular}{@{}c@{\qquad\qquad}c@{}}
    \begin{tikzpicture}[inner sep=1mm, node distance=2.5em]
      \node [vertex] (jedna) [label=above:\ensuremath{1}] {};
      \node (aux) [below of=jedna] {};
      \node [vertex] (a) [below of=aux, label=right:\ensuremath{a}] {};
      \node [vertex] (b) [left of=aux, label=left:\ensuremath{b}] {}
      edge [edge, -] (jedna)
      edge [edge, -] (a);
      \node [vertex] (c) [right of=aux, label=right:\ensuremath{c}] {}
      edge [edge, -] (jedna)
      edge [edge, -] (a);
      \node [vertex] (nula) [below of=a, yshift=-.5em, label=below:\ensuremath{0}] {}
      edge [edge, -] (a);
    \end{tikzpicture}
    &
    \lower-5.5em\hbox{
      \bgroup
      \setlength{\tabcolsep}{1.6ex}
      \begin{tabular}{|c||c|c|c|c|c|}
        \hline
        $\otimes$ & $0$ & $a$ & $b$ & $c$ & $1$ \\
        \hline
        \hline
        $0$ & $0$ & $0$ & $0$ & $0$ & $0$ \\
        \hline
        $a$ & $0$ & $0$ & $0$ & $0$ & $a$ \\
        \hline
        $b$ & $0$ & $0$ & $b$ & $0$ & $b$ \\
        \hline
        $c$ & $0$ & $0$ & $0$ & $c$ & $c$ \\
        \hline
        $1$ & $0$ & $a$ & $b$ & $c$ & $1$ \\
        \hline
      \end{tabular}
      \egroup}
  \end{tabular}
  \caption{A non-linear partially ordered monoid.}
  \label{fig:nonlin}
\end{figure}

We conclude this section by a remark showing that if $\Gamma \nvdash A \I B$,
then it may not be possible to find a \emph{linear} $\mathbf{L}$-model of
$\Gamma$ which serves as a counterexample. A model is linear if the order
in $\mathbf{L}$ is total, i.e., for any $a,b \in L$, we have $a \leq b$
or $b \leq a$.

\begin{remark}\label{rem:linear}
  Take $\Gamma = \{p \I ux,\, p \I vy,\, uy \I q,\, vx \I q\}$. It can be
  easily seen that $\Gamma \vdash pp \I qq$ because
  $pp \rewrites uxp \rewrites uxvy \rewrites uqy \rewrites qq$.
  On the other hand, we have $\Gamma \nvdash p \I q$. Indeed, we can consider
  $\mathbf{L} = \langle L,\leq,\otimes,1\rangle$ with $\langle L,\leq\rangle$
  given by the Hasse diagram in Figure~\ref{fig:nonlin}\,(left) and
  with $\otimes$ given by the table in Figure~\ref{fig:nonlin}\,(right).
  For $e\!: \Var \to L$ such that 
  $e(p) = a$,
  $e(q) = 0$,
  $e(u) = b$,
  $e(v) = c$,
  $e(x) = b$,
  $e(y) = c$,
  we have
  \begin{align*}
    e(p) = a &\leq b = b \otimes b = e(ux), \\
    e(p) = a &\leq c = c \otimes c = e(vy), \\
    e(uy) = b \otimes c = 0 &\leq 0 = e(q), \\
    e(vx) = c \otimes b = 0 &\leq 0 = e(q),
  \end{align*}
  i.e., $e$ is an $\mathbf{L}$-model of $\Gamma$.
  In addition, $e(p) = a \nleq 0 = e(q)$, showing $\Gamma \nvdash p \I q$.
  We claim there is no linear $\mathbf{L}$-model of $\Gamma$ which refutes
  $p \I q$. Indeed, suppose that $e$ is a linear $\mathbf{L}$-model of $\Gamma$.
  Since $\mathbf{L}$ is linear, we have $e(x) \leq e(y)$ or $e(y) \leq e(x)$.
  In the first case, the monotony of $\otimes$ gives $e(ux) \leq e(uy)$ and
  so $e(p) \leq e(ux) \leq e(uy) \leq e(q)$, meaning $e \models p \I q$.
  In the second case, $e(p) \leq e(vy) \leq e(vx) \leq e(q)$,
  meaning $e \models p \I q$ again. Therefore, in the search
  for a counterexample, we cannot restrict ourselves
  to linear $\mathbf{L}$-models, only. It also means that our logic does not
  admit \emph{linear completions} of theories in the following sense:
  Given $\Gamma$ and $A \I B$ such that $\Gamma \nvdash A \I B$, 
  in general there is no
  $\Gamma' \supseteq \Gamma$ such that $\Gamma' \nvdash A \I B$
  and $\Gamma' \vdash E \I F$ or $\Gamma' \vdash F \I E$ for all $E$ and $F$
  of the form~\eqref{eqn:A}. As a further consequence, our logic does not
  admit the principle of ``proofs by cases'': In general the facts that
  $\Gamma \cup \{E \I F\} \vdash A \I B$ and
  $\Gamma \cup \{F \I E\} \vdash A \I B$ do not yield 
  $\Gamma \vdash A \I B$. This also explains our choice of the name for the
  logic. Namely, our choice of the word ``monoidal'' over the word ``fuzzy''
  because in the modern understanding of (formal) fuzzy logics, properties
  like the presence of the principle of proofs by cases are considered
  essential, see \cite{CiHaNo1} for details.
\end{remark}

\section{Propositional vs. Relational Semantics}\label{sec:propdb}
So far, we have used a \emph{propositional semantics} of the formulas.
That means, MFDs have been interpreted given evaluations of propositional
variables. In order to establish the desired connection to relational
databases, we show that MFDs have an equivalent semantics based on
evaluating MFDs in relations on relation schemes. Since relations in
databases are considered on finite relation schemes, we consider here
only entailment from finite theories.

Let $\mathbf{L} = \langle L,\leq,\otimes,1\rangle$ be an integral
commutative pomonoid.
Let $R$ be a relation scheme (a finite set of attributes);
$\mathbf{r}$ be a relation on $R$ in the usual sense;
$D^{\mathbf{r}}_p$ denote the domain of attribute $p$ in~$\mathbf{r}$
(we consider the notion of a domain as a synonym for the notion of a type,
see~\cite{DaDa:DTRM}).
Furthermore, consider for any $p \in R$
a map $\approx^{\mathbf{r}}_p: D^{\mathbf{r}}_p \times D^{\mathbf{r}}_p \to L$,
where $L$ is the support of $\mathbf{L}$.
Following the discussion in Section~\ref{sec:intro},
the result of $d_1 \approx^{\mathbf{r}}_p d_2$ can be seen as a degree
in $L$ which is an answer to
the atomic query: ``Is $d_1$ similar to $d_2$?''
We assume that $\approx^{\mathbf{r}}_p$ are supplied along with the data
and assume that $d \approx^{\mathbf{r}}_p d = 1$ for each
$d \in D^{\mathbf{r}}_p$ and $p \in \Var$ (i.e., each element is similar
to itself to degree $1$---the highest degree in $\mathbf{L}$).

For $\mathbf{r}$, $A$ of the form~\eqref{eqn:A},
and any tuples $r_1,r_2 \in \mathbf{r}$, we put
\begin{align}
  r_1 \approx^{\mathbf{r}}_A r_2 &=
  \bigl(r_1(p_1) \approx^{\mathbf{r}}_{p_1} r_2(p_1)\bigr)^{A(p_1)}
  \otimes \cdots \otimes
  \bigl(r_1(p_n) \approx^{\mathbf{r}}_{p_n} r_2(p_n)\bigr)^{A(p_n)}
  \label{eqn:sim}
\end{align}
for $R \subseteq \{p_1,\dots,p_n\}$. Since $\otimes$ serves as an
interpretation of a conjunction, \eqref{eqn:sim} can be seen as a degree
in $L$ which is a result of conjunctive query:
``Are $r_1(p_1)$ similar to $r_2(p_1)$ \emph{and} $\cdots$ \emph{and}
$r_1(p_n)$ similar to $r_2(p_n)$?'' Therefore, $r_1 \approx^{\mathbf{r}}_A r_2$
is the degree to which tuples $r_1$ and $r_2$ in $\mathbf{r}$ are similar on
all attributes in $A$.
For $\mathbf{r}$ and $A \I B$ we say that
\emph{$\mathbf{r}$ satisfies $A \I B$}, written $\mathbf{r} \models A \I B$,
if for any tuples $r_1,r_2 \in \mathbf{r}$, the following inequality holds:
\begin{align}
  r_1 \approx^{\mathbf{r}}_A r_2 &\leq r_1 \approx^{\mathbf{r}}_B r_2.
  \label{eqn:db_sem}
\end{align}
Using the notion of satisfaction of MFDs in relations, we introduce models
and semantic entailment as before. Namely, we put
\begin{align}
  \mathrm{Mod}(\Gamma) =
  \{\mathbf{r};\, \mathbf{r} \models E \I F \text{ for all } E \I F \in \Gamma\}
\end{align}
and call each $\mathbf{r} \in \mathrm{Mod}(\Gamma)$
a (\emph{relational}) \emph{model of $\Gamma$.}
An MFD $A \I B$ is \emph{semantically entailed by $\Gamma$}
(\emph{in the relational sense}) if 
$\mathrm{Mod}(\Gamma) \subseteq \mathrm{Mod}(\{A \I B\})$, i.e.,
if $A \I B$ is satisfied in every relational model of $\Gamma$.

\begin{theorem}\label{th:altsem}
  Let $\Gamma$ be finite. Then,
  $\Gamma \models A \I B$
  if{}f\/
  $A \I B$ is 
  semantically entailed by $\Gamma$
  in the relational sense.
\end{theorem}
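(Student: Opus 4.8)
The plan is to prove the two directions separately, with the nontrivial work lying in showing that the relational semantics is no weaker than the propositional one.

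\medskip

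\textbf{From propositional to relational entailment.}
First I would show that each relation $\mathbf{r}$ over a scheme $R$ induces an $\mathbf{L}$-evaluation whose satisfaction of an MFD mirrors relational satisfaction. The idea is standard: fix any two tuples $r_1, r_2 \in \mathbf{r}$ and define $e_{r_1,r_2}(p) = r_1(p) \approx^{\mathbf{r}}_p r_2(p)$ for $p \in R$, extending arbitrarily (say by $1$) on variables outside $R$. Using the definitions~\eqref{eqn:eA} and~\eqref{eqn:sim}, one checks $e_{r_1,r_2}(A) = r_1 \approx^{\mathbf{r}}_A r_2$ for every $A$ of the form~\eqref{eqn:A}. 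Hence $\mathbf{r} \models E \I F$ iff $e_{r_1,r_2} \models E \I F$ for every pair $r_1,r_2$, so if $\mathbf{r} \in \mathrm{Mod}(\Gamma)$ then each $e_{r_1,r_2}$ is an $\mathbf{L}$-model of $\Gamma$; by $\Gamma \models A \I B$ we get $e_{r_1,r_2} \models A \I B$ for all pairs, i.e.\ $\mathbf{r} \models A \I B$. This shows that relational entailment follows from $\Gamma \models A \I B$.

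\medskip

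\textbf{From relational to propositional entailment.}
For the converse I would argue contrapositively: if $\Gamma \nmodels A \I B$, build a relation refuting $A \I B$ while modelling $\Gamma$. By Theorem~\ref{th:cmp} (completeness) and the construction in its proof, there is an $\mathbf{L}/\Gamma$-model $e$ with $e(A) \nleqslant_\Gamma e(B)$, where $\mathbf{L}/\Gamma$ is the Lindenbaum pomonoid built from equivalence classes of maps. The natural move is to turn this single evaluation into a two-tuple relation: take $R$ to be the (finite) set of variables actually occurring in $\Gamma \cup \{A \I B\}$, let $\mathbf{r} = \{r_1, r_2\}$, set all domains $D^{\mathbf{r}}_p$ to have two distinct elements $\{0_p, 1_p\}$ with $r_1(p) = 1_p$ and $r_2(p) = 0_p$, and define $\approx^{\mathbf{r}}_p$ by $1_p \approx^{\mathbf{r}}_p 1_p = 0_p \approx^{\mathbf{r}}_p 0_p = 1$ (as required) and $1_p \approx^{\mathbf{r}}_p 0_p = 0_p \approx^{\mathbf{r}}_p 1_p = e(p)$. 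With this choice $r_1 \approx^{\mathbf{r}}_C r_2 = e(C)$ for every $C$, so $\mathbf{r} \models E \I F$ iff $e \models E \I F$ for pairs of distinct tuples; the pair $(r_i,r_i)$ always satisfies every MFD since $d \approx^{\mathbf{r}}_p d = 1$ forces both sides to equal $1$. Thus $\mathbf{r} \in \mathrm{Mod}(\Gamma)$ but $\mathbf{r} \nmodels A \I B$, giving $\Gamma$ does not entail $A \I B$ relationally.

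\medskip

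\textbf{Main obstacle.}
The delicate point is the second direction, and specifically making the refuting relation legitimate. The values $e(p)$ live in an a priori unrestricted integral commutative pomonoid $\mathbf{L}/\Gamma$, and one must ensure that the symmetric two-point similarity matrix $\approx^{\mathbf{r}}_p$ is acceptable under whatever axioms the relational framework demands (the excerpt imposes only reflexivity $d \approx^{\mathbf{r}}_p d = 1$, which the construction respects, but I would double-check that no further conditions such as symmetry beyond what is built in, or a triangle-type inequality, are assumed; if symmetry is required it holds by construction, and if not even needed the argument simplifies). A secondary subtlety is the degenerate case where $A \I B$ or some formula of $\Gamma$ mentions variables not in $R$, or where $\Gamma$ is empty—handled by the freedom to extend $e$ by $1$ and to enlarge $R$ harmlessly. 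Finiteness of $\Gamma$ is used to keep $R$ finite, as required for relation schemes; otherwise both directions are essentially bookkeeping on top of~\eqref{eqn:eA},~\eqref{eqn:sim} and Theorem~\ref{th:cmp}.
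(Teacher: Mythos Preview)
Your proposal is correct and follows essentially the same two-tuple/evaluation correspondence as the paper: relations induce evaluations $e_{r_1,r_2}(p)=r_1(p)\approx^{\mathbf r}_p r_2(p)$, and conversely a single evaluation is realized by a two-tuple relation with suitably chosen similarities. One inessential detour: in the second direction you invoke Theorem~\ref{th:cmp} and the Lindenbaum construction to produce the refuting evaluation, but this is not needed---by definition, $\Gamma\nmodels A\I B$ already hands you some integral commutative pomonoid $\mathbf L$ and an $\mathbf L$-model $e$ of $\Gamma$ with $e(A)\nleq e(B)$, and your two-tuple relation can be built from that $e$ directly (this is what the paper does, taking domains inside $L$ with $r_1(p)=1$, $r_2(p)=e(p)$).
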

\begin{proof}
  Let $R$ be a finite subset of $\Var$ which contains all propositional
  variables appearing in $A \I B$ and all formulas in $\Gamma$.
  The if-part follows by the fact that for each $\mathbf{L}$-model
  $e$ of $\Gamma$ there is $\mathbf{r} \in \mathrm{Mod}(\Gamma)$ such that
  $e \models E \I F$ if{}f $\mathbf{r} \models E \I F$ for any $E \I F$.
  Namely, we can consider $\mathbf{r} = \{r_1,r_2\}$ such that
  $r_1(p) = 1$ for any $p \in R$, $r_2(p) = e(p)$,
  and $1 \approx^{\mathbf{r}}_{p} e(p) =
  e(p) \approx^{\mathbf{r}}_{p} 1 =
  e(p)$ for any $p \in R$.
  Hence, the domains of attributes in $\mathbf{r}$ are considered as
  subsets of $L$.
 
  Conversely, for each $\mathbf{r} \in \mathrm{Mod}(\Gamma)$ with
  all $\approx^{\mathbf{r}}_p$ defined using $\mathbf{L}$, there
  is a finite set $\mathcal{S}$ of $\mathbf{L}$-models $e$ such that
  $\mathbf{r} \models E \I F$ if{}f $e \models E \I F$ for all
  $e \in \mathcal{S}$. In particular,
  we let $\mathcal{S} = \{e_{r_1,r_2};\, r_1,r_2 \in \mathbf{r}\}$,
  where $e_{r_1,r_2}(p) = r_1(p) \approx^{\mathbf{r}}_{p} r_2(p)$ for
  all $p \in R$. The rest is easy to check.
\end{proof}

As a result of Theorem~\ref{th:altsem}, the relational and propositional
semantics have the same notion of semantic entailment and thus all
observations on provability we have made in Section~\ref{sec:log},
Section~\ref{sec:compl}, and Section~\ref{sec:comput} apply
to both semantics.

We conclude the paper by an illustrative example in which we return to our
initial motivation presented in Section~\ref{sec:intro}. The example shows
a particular relation with similarities on domains and examples of constraints
formulated in terms of MFDs. In addition, we show how the inference system
can be used to reason about dependencies which hold in data.

\begin{figure}
  \centering
  \renewcommand{\arraystretch}{0.9}
  \begin{tabular}{|*{3}c|}
    \hline
    \atr{AREA} &
    \atr{LOCATION} &
    \atr{PRICE} \\
    \hline
    $2,510$ & $[12.2, 23.4]$ & $810,000$ \\
    $2,730$ & $[35.3, 40.0]$ & $650,000$ \\
    $2,850$ & $[95.8, 82.3]$ & $625,000$ \\
    $4,250$ & $[20.1, 45.7]$ & $925,000$ \\
    \hline
  \end{tabular}
  \caption{Illustrative relation $\mathbf{r}$ on relation scheme
    $\{\atr{AREA},\atr{LOCATION},\atr{PRICE}\}$.}
  \label{fig:r}
\end{figure}

\begin{example}
  Consider the relation $\mathbf{r}$ in Figure~\ref{fig:r}. The relation
  is defined on relation scheme consisting of attributes $\atr{AREA}$,
  $\atr{LOCATION}$, and $\atr{PRICE}$ representing property area measured
  in square feet, location represented by coordinates on a fictitious map,
  and price in USD. For illustrative purposes, we introduce similarities
  on domains of the attributes by transforming the Euclidian distance
  of domain values to the real unit interval using the exponential function.
  In particular, denoting the Euclidian distance of $a$ and $b$
  by $\mathrm{d}(a,b)$,
  for $y$ being $\atr{AREA}$, $\atr{LOCATION}$, or $\atr{PRICE}$,
  we put
  \begin{align*}
    a \approx^\mathbf{r}_y b &=
    \textstyle \exp\bigl(- 10^{-c_y} \cdot \mathrm{d}(a,b)\bigr),
  \end{align*}
  where $c_\atr{LOCATION} = 2$, $c_\atr{AREA} = 4$, and $c_\atr{PRICE} = 6$.
  Furthermore, we assume that $\mathbf{L}$ is the same as
  in Remark~\ref{rem:nsound}. In this setting, we have
  \begin{align*}
    \mathbf{r} \models (\atr{LOCATION} \logand \atr{AREA}) \I \atr{PRICE},
  \end{align*}
  Indeed, in the non-trivial cases and considering
  the symmetry of our similarity, we get that
  \begin{align*}
    [12.2, 23.4] \approx^\mathbf{r}_{\atr{L}} [35.3, 40.0] \otimes
    2510 \approx^\mathbf{r}_{\atr{A}} 2730 = 0.73
    &\leq 0.85 = 810000 \approx^\mathbf{r}_{\atr{P}} 650000, \\
    [12.2, 23.4] \approx^\mathbf{r}_{\atr{L}} [95.8, 82.3] \otimes
    2510 \approx^\mathbf{r}_{\atr{A}} 2850 = 0.34
    &\leq 0.83 = 810000 \approx^\mathbf{r}_{\atr{P}} 625000, \\
    [12.2, 23.4] \approx^\mathbf{r}_{\atr{L}} [20.1, 45.7] \otimes
    2510 \approx^\mathbf{r}_{\atr{A}} 4250 = 0.66
    &\leq 0.89 = 810000 \approx^\mathbf{r}_{\atr{P}} 925000, \\
    [35.3, 40.0] \approx^\mathbf{r}_{\atr{L}} [95.8, 82.3] \otimes
    2730 \approx^\mathbf{r}_{\atr{A}} 2850 = 0.47
    &\leq 0.97 = 650000 \approx^\mathbf{r}_{\atr{P}} 625000, \\
    [35.3, 40.0] \approx^\mathbf{r}_{\atr{L}} [20.1, 45.7] \otimes
    2730 \approx^\mathbf{r}_{\atr{A}} 4250 = 0.73
    &\leq 0.75 = 650000 \approx^\mathbf{r}_{\atr{P}} 925000, \\
    [95.8, 82.3] \approx^\mathbf{r}_{\atr{L}} [20.1, 45.7] \otimes
    2850 \approx^\mathbf{r}_{\atr{A}} 4250 = 0.37
    &\leq 0.74 = 625000 \approx^\mathbf{r}_{\atr{P}} 925000,
  \end{align*}
  where \atr{A}, \atr{L}, and \atr{P} are abbreviations for
  \atr{AREA}, \atr{LOCATION}, and \atr{PRICE}, respectively.
  Therefore, for this particular $\mathbf{r}$ and the choice of
  the similarities on domains and~$\otimes$, the dependency says that
  similar values of area and location imply similar prices. In contrast,
  \begin{align*}
    \mathbf{r} \nmodels \atr{PRICE} \I \atr{LOCATION}
  \end{align*}
  because we have, e.g.,
  \begin{align*}
    810000 \approx^\mathbf{r}_{\atr{P}} 625000 = 0.83
    &\nleq 0.35 = [12.2, 23.4] \approx^\mathbf{r}_{\atr{L}} [95.8, 82.3].
  \end{align*}
  In words, similar prices do not yield similar locations. Therefore,
  $\atr{PRICE} \I \atr{LOCATION}$ is an example of a dependency
  which is trivially satisfied in $\mathbf{r}$ as an ordinary FD but it
  is not satisfied in $\mathbf{r}$ as an MFD.
  Let us now assume a situation of violating
  $(\atr{LOCATION} \logand \atr{AREA}) \I \atr{PRICE}$ by an attempted
  insertion of a new tuple whose values of 
  $\atr{AREA}$, $\atr{LOCATION}$, and $\atr{PRICE}$ are
  $2,600$, $[50.0, 50.0]$, and $\$450,000$, respectively.
  For this tuple, we have
  \begin{align*}
    [50.0, 50.0] \approx^\mathbf{r}_{\atr{L}} [35.3, 40.0] \otimes
    2600 \approx^\mathbf{r}_{\atr{A}} 2730 &= 0.8263 \\
    &\nleq 0.8187 = 450000 \approx^\mathbf{r}_{\atr{P}} 650000, \\
    [50.0, 50.0] \approx^\mathbf{r}_{\atr{L}} [20.1, 45.7] \otimes
    2600 \approx^\mathbf{r}_{\atr{A}} 4250 &= 0.6268 \\
    &\nleq 0.6219 = 450000 \approx^\mathbf{r}_{\atr{P}} 925000,
  \end{align*}
  i.e., $(\atr{LOCATION} \logand \atr{AREA}) \I \atr{PRICE}$ would be
  violated. 
  Now, let us assume that despite the constraint violation, we would like
  to insert the tuple in $\mathbf{r}$ because the constraint given by 
  $(\atr{LOCATION} \logand \atr{AREA}) \I \atr{PRICE}$
  was recognized as too strict. In such a situation, we have the following
  options to deal with the validity of the constraint: We may
  \begin{enumerate}\parskip=0pt
  \item[(a)]
    replace $\otimes$ by a more suitable aggregation function,
  \item[(b)]
    redefine similarities on domains, or
  \item[(c)]
    replace the MFD by a weaker constraint.
  \end{enumerate}
  By applying (a) and (b), we may render
  $(\atr{LOCATION} \logand \atr{AREA}) \I \atr{PRICE}$ valid in the
  new relation, however, the change of $\otimes$ (i.e., the change of
  $\mathbf{L}$) or similarities on domains may not be desirable because
  there may be other dependencies where the present choice is adequate
  and works well. Following (c) means introducing a new MFD instead
  of $(\atr{LOCATION} \logand \atr{AREA}) \I \atr{PRICE}$ which 
  can be derived from $(\atr{LOCATION} \logand \atr{AREA}) \I \atr{PRICE}$
  using the inference system presented in Section~\ref{sec:log}.
  For instance, we may want to put less emphasis on
  the similarity of areas, i.e., we may replace the constraint by
  \begin{align*}
    (\atr{LOCATION} \logand \atr{AREA} \logand \atr{AREA}) \I \atr{PRICE}
  \end{align*}
  which is satisfied in the new relation because
  \begin{align*}
    [50.0, 50.0] \approx^\mathbf{r}_{\atr{L}} [35.3, 40.0] &\otimes
    2600 \approx^\mathbf{r}_{\atr{A}} 2730 \otimes
    2600 \approx^\mathbf{r}_{\atr{A}} 2730 = 0.8156 \\
    & \leq 0.8187 = 450000 \approx^\mathbf{r}_{\atr{P}} 650000, \\
    [50.0, 50.0] \approx^\mathbf{r}_{\atr{L}} [20.1, 45.7] &\otimes
    2600 \approx^\mathbf{r}_{\atr{A}} 4250 \otimes
    2600 \approx^\mathbf{r}_{\atr{A}} 4250 = 0.5315 \\
    & \leq 0.6219 = 450000 \approx^\mathbf{r}_{\atr{P}} 925000.
  \end{align*}
  Let us note that $(\atr{LOCATION} \logand \atr{AREA} \logand \atr{AREA}) \I
  \atr{PRICE}$ is indeed derivable from the original constraint:
  \vspace*{-.6\baselineskip}%
  \begin{align*}
    \CUT{(\atr{LOCATION} \logand \atr{AREA}) \I \atr{PRICE}}{
      \AX{(\atr{PRICE} \logand \atr{AREA}) \I \atr{PRICE}}}{
      (\atr{LOCATION} \logand \atr{AREA} \logand \atr{AREA}) \I \atr{PRICE}}.
  \end{align*}
  As a result,
  $(\atr{LOCATION} \logand \atr{AREA} \logand \atr{AREA}) \I \atr{PRICE}$
  may be seen as prescribing the same dependency as 
  $(\atr{LOCATION} \logand \atr{AREA}) \I \atr{PRICE}$
  (i.e., similar locations and areas imply similar prices) except that we
  put less emphasis on the similarity of areas. This example illustrates that
  MFDs with multiple occurrences of attributes in the antecedent (or consequent)
  are not only what we inevitably get when we shift from idempotent aggregation
  functions to general ones but can also be used to control the sensitivity
  of similarity-based constraints.
\end{example}

\section{Related Work}\label{sec:relwork}
In this section, we comment on the relationship of our logic to other approaches
which study formulas expressing if-then dependencies whose semantics involves
degrees coming from general structures of truth values.

First, let us note that there exists a vast amount of papers on
``fuzzy functional dependencies'', often with questionable technical quality,
which combine (in various ways) the concepts of fuzzy sets and functional
dependencies in order
to formalize vague dependencies between attributes. While this idea is tempting
and close to what we present here, our objection is that most of these papers
are purely definitional or just experimental and are not interested in the
underlying logic in the narrow sense of it (i.e., in logic as a study of consequence).
From one viewpoint this is not surprising since a number of papers in this
category predate the beginning of systematic formalization of various types
of fuzzy logics which appeared in the late 90's, see~\cite{Haj:MFL} as
a standard reference and a historical overview. One of the most influential
early approaches that enjoyed interest in the database community
is~\cite{RaMa:Ffdljdfrds},
further papers dealing with fuzzy functional dependencies and related phenomena
include \cite{BuPe:Frdrd,CuVi:Ndffdfrd,PrTe:Gdratiuivq}.
Since our paper is not a survey,
we do not write further details on such approaches and refer interested
readers to~\cite{BeVy:Codd} where they can find further comments.

Our approach is also related to approaches to graded if-then rules which
are motivated by formal concept analysis~\cite{GaWi:FCA} of data with
graded attributes. In~\cite{Po:FB}, Polland proposed graded if-then rules
with semantics defined using complete residuated lattices as structures of
degrees. The approach has been later extended and more developed
in~\cite{BeVy:ADfDwG} by considering formalizations of
linguistic hedges~\cite{EsGoNo:Hedges,Za:Afstilh} as additional parameters
of semantics of the if-then rules. Compared to the present paper, there are
significant technical and epistemic differences.
First, the approaches in~\cite{BeVy:ADfDwG,Po:FB} use arbitrary,
but fixed, structures of degrees. That is, instead of focusing on formulas
which may be true in $\mathbf{L}$-models where $\mathbf{L}$ ranges over
a class of structures of degrees (like the class of all integral
commutative pomonoids), the papers fix $\mathbf{L}$ and define semantics
with respect to the fixed $\mathbf{L}$. Second, the formulas
in~\cite{BeVy:ADfDwG,Po:FB} are syntactically different. Namely,
they involve idempotent conjunctions instead of general non-idempotent ones.
On the other hand, the formulas use degrees in $\mathbf{L}$ to express lower
bound of degrees to which attributes in antecedents and consequents of
formulas are present---this is possible because $\mathbf{L}$ is fixed.
As a consequence, the formulas in~\cite{BeVy:ADfDwG,Po:FB} allow to express
dependencies like ``if $x$ is true at least to degree $a$ and $y$ is true
at least to degree $b$, then $z$ is true at least to degree $c$'' with
$a,b,c$ being degrees in the fixed $\mathbf{L}$. Third, unlike our logic,
the logic for such rules is Pavelka-style
complete~\cite{Pav:Ofl1,Pav:Ofl2,Pav:Ofl3} which means that degrees
of semantic consequence agree with (suitably defined) degrees of provability.
In our case, Pavelka-style completeness cannot be considered because
$\mathbf{L}$ is not fixed. On the other hand, \cite{BeVy:ADfDwG} shows that
in order to obtain Pavelka-style completeness for a general (infinite)
$\mathbf{L}$, one has to resort to admitting infinitary inference rules
which is not our case.

There exist approaches to if-then dependencies in relational
databases which are based on the notion of a similarity considered as
a classic relation which is at least reflexive and symmetric.
The approaches are developed in the context of the classic Boolean logic.
Examples of such approaches include the matching dependencies
\cite{Fan:Dcrm,Fan:Rarmr} which formalize constraints for matching records
from unreliable data sources, cf. also~\cite{FaGe:FDQM}.

Note that recently, probabilistic databases~\cite{DaReSu:Pdditd} aiming
at representation and querying of uncertain data are gaining popularity.
Our approach is not directly related because it
does not involve uncertainty in the probabilistic sense---like in
the classic relational model, our data is certain. Also note that the degrees
(the elements of integral commutative pomonoids) we use are not and shall not be
interpreted as degrees of belief or evidence (even if $L = [0,1])$,
cf. ``the frequentist's temptation'' in \cite{Haj:MFL} and also \cite{HaPa:Adfl}.

\section{Conclusion and Open Problems}\label{sec:concl}
We have introduced a logic for monoidal functional dependencies (MFDs) and we
proved the logic is complete with respect to the class of all integral
commutative partially ordered monoids. In addition, we have shown completeness
with respect to all complete residuated lattices. The logic of the classic FDs
may be seen as an extension of the logic of MFDs which consists of adding formulas
expressing the idempotency of conjunction. It has two natural
semantics---propositional one and relational one. We have shown the logic is
decidable and in case of non-contracting theories there is a polynomial
algorithm for deciding whether a formula follows by a finite set of other
formulas.

Further issues we consider worth studying include:
\begin{itemize}
\item
  methods for extracting non-redundant bases consisting of formulas which entail
  all formulas true in given data as in~\cite{GuDu};
\item
  approaches to use MFDs as association rules,
  possible descriptions of non-redundant rules and related algorithms,
  cf.~\cite{Zak:Mnrar};
\item
  algorithms for deciding entailment of formulas which are
  not limited to non-contracting theories;
\item
  detailed analysis of computational complexity of algorithms related to MFDs
  such as the algorithm for the entailment problem, establishing lower and upper
  complexity bounds for the entailment problem;
\item
  further logical and model-theoretical properties, e.g.,
  characterization of model classes by closure properties,
  extensions of the logic including completeness over classes
  of (linear) algebras which appear in fuzzy logics~\cite{CiHaNo1,EsGo:MTL,Haj:MFL};
\item
  possible generalizations which take into account more general structures
  than the integral commutative partially ordered monoids (e.g., structures with
  non-commutative or non-associative aggregation operations).
\end{itemize}

\subsubsection*{Acknowledgment}
Supported by grant no. \verb|P202/14-11585S| of the Czech Science Foundation.


\footnotesize\openup-1pt
\bibliographystyle{amsplain}
\bibliography{monfds}

\providecommand{\bysame}{\leavevmode\hbox to3em{\hrulefill}\thinspace}
\providecommand{\MR}{\relax\ifhmode\unskip\space\fi MR }
\providecommand{\MRhref}[2]{%
  \href{http://www.ams.org/mathscinet-getitem?mr=#1}{#2}
}
\providecommand{\href}[2]{#2}
\begin{thebibliography}{10}

\bibitem{Arm:Dsdbr}
William~Ward Armstrong, \emph{Dependency structures of data base
  relationships}, Information Processing 74: Proceedings of IFIP Congress
  (Amsterdam) (J.~L. Rosenfeld and H.~Freeman, eds.), North Holland, 1974,
  pp.~580--583.

\bibitem{BeBe:Cprttdonfrs}
Catriel Beeri and Philip~A. Bernstein, \emph{Computational problems related to
  the design of normal form relational schemas}, ACM Trans. Database Syst.
  \textbf{4} (1979), 30--59.

\bibitem{Bel:FRS}
Radim Belohlavek, \emph{Fuzzy {R}elational {S}ystems: {F}oundations and
  {P}rinciples}, Kluwer Academic Publishers, Norwell, MA, USA, 2002.

\bibitem{BeVy:Codd}
Radim Belohlavek and Vilem Vychodil, \emph{Codd's relational model from the
  point of view of fuzzy logic}, J. Log. Comput. \textbf{21} (2011), no.~5,
  851--862.

\bibitem{BeVy:ADfDwG}
\bysame, \emph{Attribute dependencies for data with grades}, CoRR
  \textbf{\href{http://arxiv.org/abs/1402.2071}{abs/1402.2071}} (2014).

\bibitem{Bir:LT}
Garrett Birkhoff, \emph{Lattice theory}, 1st ed., American Mathematical
  Society, Providence, 1940.

\bibitem{BlAl:FEP}
Willem~J. Blok and Clint~J. Van~Alten, \emph{The finite embeddability property
  for residuated lattices, pocrims and {BCK}-algebras}, Algebra Universalis
  \textbf{48} (2002), no.~3, 253--271.

\bibitem{BuPe:Frdrd}
Billy~P. Buckles and Frederick~E. Petry, \emph{A fuzzy representation of data
  for relational databases}, Fuzzy Sets and Systems \textbf{7} (1982), no.~3,
  213--226.

\bibitem{CiHaNo1}
Petr Cintula, Petr H\'ajek, and Carles Noguera (eds.), \emph{Handbook of
  {M}athematical {F}uzzy {L}ogic, {V}olume 1}, Studies in Logic, Mathematical
  Logic and Foundations, vol.~37, College Publications, 2011.

\bibitem{CuVi:Ndffdfrd}
Juan~C. Cubero and Mar\'{\i}a~Amparo Vila, \emph{A new definition of fuzzy
  functional dependency in fuzzy relational databases}, International Journal
  of Intelligent Systems \textbf{9} (1994), 441--448.

\bibitem{DaReSu:Pdditd}
Nilesh Dalvi, Christopher R\'{e}, and Dan Suciu, \emph{Probabilistic databases:
  diamonds in the dirt}, Commun. ACM \textbf{52} (2009), 86--94.

\bibitem{DaDa:DTRM}
Christopher~J. Date and Hugh Darwen, \emph{Databases, types, and the relational
  model: The third manifesto}, 3rd ed., Addison-Wesley, 2006.

\bibitem{DeCa}
Claude Delobel and Richard~G. Casey, \emph{Decomposition of a data base and the
  theory of boolean switching functions}, IBM Journal of Research and
  Development \textbf{17} (1973), no.~5, 374--386.

\bibitem{Di38}
Robert~P. Dilworth, \emph{Abstract residuation over lattices}, Bull. Amer.
  Math. Soc. \textbf{44} (1938), 262--268.

\bibitem{EsGo:MTL}
Francesc Esteva and Llu{\'\i}s Godo, \emph{Monoidal t-norm based logic:
  {T}owards a logic for left-continuous t-norms}, Fuzzy Sets and Systems
  \textbf{124} (2001), no.~3, 271--288.

\bibitem{EsGoNo:Hedges}
Francesc Esteva, Llu\'\i s~Godo, and Carles Noguera, \emph{A logical approach
  to fuzzy truth hedges}, Information Sciences \textbf{232} (2013), 366--385.

\bibitem{Fagin}
Ronald Fagin, \emph{Functional dependencies in a relational database and
  propositional logic}, IBM Journal of Research and Development \textbf{21}
  (1977), no.~6, 534--544.

\bibitem{Fa98:CFIfMS}
\bysame, \emph{Combining fuzzy information from multiple systems}, J. Comput.
  Syst. Sci. \textbf{58} (1999), no.~1, 83--99.

\bibitem{Fagin:Oaafm}
Ronald Fagin, Amnon Lotem, and Moni Naor, \emph{Optimal aggregation algorithms
  for middleware}, J. Comput. Syst. Sci. \textbf{66} (2003), no.~4, 614--656.

\bibitem{Fan:Dcrm}
Wenfei Fan, Hong Gao, Xibei Jia, Jianzhong Li, and Shuai Ma, \emph{Dynamic
  constraints for record matching}, The VLDB Journal \textbf{20} (2011), no.~4,
  495--520.

\bibitem{FaGe:FDQM}
Wenfei Fan and Floris Geerts, \emph{Foundations of {D}ata {Q}uality
  {M}anagement}, Synthesis Lectures on Data Management, vol.~4, Morgan \&
  Claypool Publishers, 2012.

\bibitem{Fan:Rarmr}
Wenfei Fan, Xibei Jia, Jianzhong Li, and Shuai Ma, \emph{Reasoning about record
  matching rules}, Proc. VLDB Endow. \textbf{2} (2009), no.~1, 407--418.

\bibitem{GaJiKoOn:RL}
Nikolaos Galatos, Peter Jipsen, Tomacz Kowalski, and Hiroakira Ono,
  \emph{{R}esiduated {L}attices: {A}n {A}lgebraic {G}limpse at {S}ubstructural
  {L}ogics, {V}olume 151}, 1st ed., Elsevier Science, San Diego, USA, 2007.

\bibitem{GaWi:FCA}
Bernhard Ganter and Rudolf Wille, \emph{Formal concept analysis: Mathematical
  foundations}, 1st ed., Springer-Verlag New York, Inc., Secaucus, NJ, USA,
  1997.

\bibitem{Gog:Lic}
Joseph~A. Goguen, \emph{The logic of inexact concepts}, Synthese \textbf{19}
  (1979), 325--373.

\bibitem{GuDu}
Jean-Louis Guigues and Vincent Duquenne, \emph{Familles minimales
  d'im\-pli\-ca\-tions informatives resultant d'un tableau de donn\'ees
  binaires}, Math. Sci. Humaines \textbf{95} (1986), 5--18.

\bibitem{Haj:MFL}
Petr H\'ajek, \emph{Metamathematics of {F}uzzy {L}ogic}, Kluwer Academic
  Publishers, Dordrecht, The Netherlands, 1998.

\bibitem{HaPa:Adfl}
Petr H\'ajek and Jeff Paris, \emph{A dialogue on fuzzy logic}, Soft Computing
  \textbf{1} (1997), no.~1, 3--5.

\bibitem{Ho:ML}
Ulrich H\"ohle, \emph{Monoidal logic}, Fuzzy-Systems in Computer Science
  (R.~Kruse, J.~Gebhardt, and R.~Palm, eds.), Artificial Intelligence /
  K\"unstliche Intelligenz, Vieweg+Teubner Verlag, 1994, pp.~233--243.

\bibitem{Il:ASOTQPTiRDS}
Ihab~F. Ilyas, George Beskales, and Mohamed~A. Soliman, \emph{A survey of top-k
  query processing techniques in relational database systems}, ACM Comp. Surv.
  \textbf{40} (2008), no.~4, 11:1--11:58.

\bibitem{KMP:TN}
Erich~Peter Klement, Radko Mesiar, and Endre Pap, \emph{Triangular {N}orms}, 1
  ed., Springer, 2000.

\bibitem{Ma:Mcrdm}
David Maier, \emph{Minimum covers in relational database model}, J. ACM
  \textbf{27} (1980), no.~4, 664--674.

\bibitem{Mai:TRD}
\bysame, \emph{Theory of {R}elational {D}atabases}, Computer Science Pr,
  Rockville, MD, USA, 1983.

\bibitem{Me87}
Elliott Mendelson, \emph{Introduction to {M}athematical {L}ogic}, Chapman and
  Hall, 1987.

\bibitem{Pav:Ofl1}
Jan Pavelka, \emph{On fuzzy logic {I}: {M}any-valued rules of inference},
  Mathematical Logic Quarterly \textbf{25} (1979), no.~3--6, 45--52.

\bibitem{Pav:Ofl2}
\bysame, \emph{On fuzzy logic {II}: {E}nriched residuated lattices and
  semantics of propositional calculi}, Mathematical Logic Quarterly \textbf{25}
  (1979), no.~7--12, 119--134.

\bibitem{Pav:Ofl3}
\bysame, \emph{On fuzzy logic {III}: {S}emantical completeness of some
  many-valued propositional calculi}, Mathematical Logic Quarterly \textbf{25}
  (1979), no.~25--29, 447--464.

\bibitem{Po:FB}
Silke Pollandt, \emph{Fuzzy-{B}egriffe: {F}ormale {B}egriffsanalyse unscharfer
  {D}aten}, Springer, 1997.

\bibitem{PrTe:Gdratiuivq}
Henri Prade and Claudette Testemale, \emph{Generalizing database relational
  algebra for the treatment of incomplete or uncertain information and vague
  queries}, Information Sciences \textbf{34} (1984), no.~2, 115--143.

\bibitem{RaMa:Ffdljdfrds}
K.~V. S. V.~N. Raju and Arun~K. Majumdar, \emph{Fuzzy functional dependencies
  and lossless join decomposition of fuzzy relational database systems}, ACM
  Transactions on Database Systems \textbf{13} (1988), no.~2, 129--166.

\bibitem{RaSi}
Helena Rasiowa and Roman Sikorski, \emph{A proof of the completeness theorem of
  {G}\"odel}, Fundam. Math. \textbf{37} (1950), 193--200.

\bibitem{SaDePaFa:Ebrddfpl}
Yehoshua Sagiv, Claude Delobel, D.~Scott Parker, Jr., and Ronald Fagin,
  \emph{An equivalence between relational database dependencies and a fragment
  of propositional logic}, J. ACM \textbf{28} (1981), no.~3, 435--453.

\bibitem{DiWa}
Morgan Ward and Robert~P. Dilworth, \emph{Residuated lattices}, Trans. Amer.
  Math. Soc. \textbf{45} (1939), 335--354.

\bibitem{Wec:UAfCS}
Wolfgang Wechler, \emph{Universal {A}lgebra for {C}omputer {S}cientists}, EATCS
  Monographs on Theoretical Computer Science, vol.~25, Springer-Verlag, Berlin
  Heidelberg, 1992.

\bibitem{Za:Afstilh}
Lotfi~A. Zadeh, \emph{A fuzzy-set-theoretic interpretation of linguistic
  hedges}, Journal of Cybernetics \textbf{2} (1972), no.~3, 4--34.

\bibitem{Zak:Mnrar}
Mohammed~J. Zaki, \emph{Mining non-redundant association rules}, Data Mining
  and Knowledge Discovery \textbf{9} (2004), 223--248.

\end{thebibliography}

\end{document}